\def\tbar{{\overline{\bf t}}}
\def\eff{{\cal F}_{7/3}}
\def\fmod#1 #2{#1\ ({\rm mod}\ #2)}
\def\gfg{$\frac{7}{3}$-power-free }
\def\gfe{$\frac{7}{3}$-power-free}
\theoremstyle{plain}
\newtheorem{theorem}{Theorem}
\newtheorem{corollary}[theorem]{Corollary}
\newtheorem{lemma}[theorem]{Lemma}
\theoremstyle{definition}
\theoremstyle{remark}
\newtheorem{remark}[theorem]{Remark}
\title{Fife's Theorem for $\frac{7}{3}$-Powers}
\author{Narad Rampersad
\institute{
Department of Mathematics,
University of Li{\`e}ge,
Grande Traverse, 12 (Bat.\ B37),
4000 L{i\`e}ge,
Belgium}
\email{narad.rampersad@gmail.com}
\and
Jeffrey Shallit
\institute{
School of Computer Science,
University of Waterloo,
Waterloo, ON  N2L 3G1, Canada}
\email{shallit@cs.uwaterloo.ca}
\and
Arseny Shur
\institute{Department of Algebra and Discrete Mathematics,
Ural State University, 
Ekaterinburg, Russia}
\email{arseny.shur@usu.ru}
}
\begin{document}

\maketitle

\begin{abstract}
We prove a Fife-like characterization of the infinite binary
$\frac{7}{3}$-power-free words, by
giving a finite automaton of $15$ states that encodes all such words.
As a consequence, we characterize all such words that are
$2$-automatic.
\end{abstract}

\section{Introduction}

An {\it overlap} is a word of the form $axaxa$, where $a$ is a single
letter and $x$ is a (possibly empty) word.  In 1980, Earl Fife
\cite{Fife:1980} proved a theorem characterizing the infinite binary
overlap-free words as encodings of paths in a finite automaton.
Berstel \cite{Berstel:1994} later simplified the exposition, and both
Carpi \cite{Carpi:1993a} and Cassaigne \cite{Cassaigne:1993b} gave an
analogous analysis for the case of finite words.

In a previous paper \cite{Shallit:2011}, the second author gave a new
approach to Fife's theorem, based on the factorization theorem
of Restivo and Salemi \cite{Restivo&Salemi:1985a} for overlap-free words.
In this paper, we extend this analysis
by applying it to the case of \gfg words.

Given a rational number $\frac{p}{q} > 1$, we define a word $w$ to be a
$\frac{p}{q}$-power if $w$ can be written in the form $x^n x'$ where $n
= \lfloor p/q \rfloor$, $x'$ is a (possibly empty)
prefix of $x$, and $|w|/|x| = p/q$.
The word $x$ is called a {\it period} of $w$, and $p/q$ is an 
{\it exponent} of $w$.  If $p/q$ is the largest exponent of $w$,
we write $\exp(w) = p/q$.   We also say that $w$ is {\it $|x|$-periodic}.
For example, the word
{\tt alfalfa} is a $\frac{7}{3}$-power, and the corresponding period is
{\tt alf}.  Sometimes, as is routine in the literature,
we also refer to $|x|$ as the period; the context
should make it clear which is meant.

A word, whether finite or infinite, is {\it $\beta$-power-free}
if it contains no factor $w$ that is an $\alpha$-power for $\alpha\geq
\beta$.    A word is {\it $\beta^+$-power-free} if it contains no
factor $w$ that is an $\alpha$-power for $\alpha > \beta$.  Thus, the
concepts of ``overlap-free'' and ``$2^+$-power-free'' coincide.

\section{Notation and basic results}

Let $\Sigma$ be a finite alphabet.  We let $\Sigma^*$ denote the set
of all finite words over $\Sigma$ and $\Sigma^\omega$ denote the
set of all (right-) infinite words over $\Sigma$.  We say
$y$ is a {\em factor} of a word $w$ if there exist words
$x, z$ such that $w = xyz$.

If $x$ is a finite word, then $x^\omega$ represents the
infinite word $xxx \cdots$.

From now on we fix $\Sigma = \lbrace 0,1 \rbrace$.  The most famous
infinite binary overlap-free word is $\bf t$, the Thue-Morse word,
defined as the fixed point, starting with $0$, of the Thue-Morse
morphism $\mu$, which maps $0$ to $01$ and $1$ to $10$.  We have
$$ {\bf t} = t_0 t_1 t_2 \cdots = 0110100110010110 \cdots .$$
The morphism $\mu$ has a second fixed point, $\tbar$, which is
obtained from $\bf t$ by applying the complementation coding
defined by $\overline{0} = 1$ and $\overline{1} = 0$.

We let $\eff$ denote the set of (right-) infinite binary 
\gfg words.  We point out that these words are of
particular interest, because $\frac{7}{3}$ is the largest exponent $\alpha$
such that there are only polynomially-many $\alpha$-power-free words of
length $n$ \cite{Karhumaki&Shallit:2004}.
The exponent $\frac{7}{3}$ plays a special role in 
combinatorics on words, as testified to by the many papers mentioning
this exponent (e.g.,
\cite{Kolpakov&Kucherov:1997,Shur:2000,Karhumaki&Shallit:2004,Rampersad:2005,Aberkane&Currie:2005,Blondel&Cassaigne&Jungers:2009}).  

We now state a
factorization theorem for infinite \gfg words:

\begin{theorem}
Let ${\bf x} \in \eff$, and let $P = \lbrace p_0, p_1, p_2, p_3, p_4 
\rbrace$, where $p_0 = \epsilon$, $p_1 = 0$, $p_2 = 00$,
$p_3 = 1$, and $p_4 = 11$.  Then there exists ${\bf y} \in \eff$ and
$p \in P$ such that ${\bf x} = p \mu({\bf y})$.  Furthermore, this
factorization is unique, and $p$ is uniquely determined by inspecting
the first $5$ letters of ${\bf x}$.
\end{theorem}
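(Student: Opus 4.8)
The plan is to reduce the statement to locating a single \emph{phase} $k=|p|\in\{0,1,2\}$ and then reading $p$ off the front of $\mathbf{x}$. Call a position $j$ a \emph{repeat} if $x_j=x_{j+1}$. Since $\mu(0)=01$ and $\mu(1)=10$, a right-infinite word $\mathbf{z}$ lies in $\mu(\eff)$ exactly when every pair $z_{2i}z_{2i+1}$ equals $01$ or $10$, that is, when $\mathbf{z}$ has no repeat at an even position; consequently $\mathbf{x}=p\mu(\mathbf{y})$ with $|p|=k$ holds for some infinite $\mathbf{y}$ if and only if every repeat $j\ge k$ of $\mathbf{x}$ satisfies $j\not\equiv k\pmod 2$. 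Thus the whole theorem rests on understanding the parities of the repeats of $\mathbf{x}$, after which $\mathbf{y}$ is recovered by decoding blocks.

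The crux is a parity lemma: all repeats of $\mathbf{x}$ at positions $\ge 1$ share a common parity. I first record two elementary consequences of $\frac73$-power-freeness: as $000$ and $111$ are cubes, $\mathbf{x}$ has no three equal consecutive letters, and as $01010,10101$ have exponent $\frac52>\frac73$, $\mathbf{x}$ has no alternating factor of length $5$; hence the first repeat occurs at a position $\le 3$ and consecutive repeats lie within distance $4$. To prove the lemma, take consecutive repeats $j<j'$ with $j\ge 1$ and suppose $j'-j$ is odd; a gap of $1$ gives a cube, so $j'-j\ge 3$. Writing $a=x_j$, cube-freeness forces $x_{j-1}=\overline{a}$ and $x_{j'+2}=\overline{a}$, and the positions strictly between the repeats alternate. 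A gap of exactly $3$ then yields the factor $x_{j-1}\cdots x_{j'+2}=\overline{a}\,a\,a\,\overline{a}\,a\,a\,\overline{a}$, i.e. (up to renaming) $1001001$, a $\frac73$-power with period $100$; a gap $\ge 5$ produces an alternating factor of length $5$. Either way $\mathbf{x}\notin\eff$, a contradiction. The hypothesis $j\ge 1$ is essential: the prefix $00100$ has repeats of opposite parity at positions $0$ and $3$, which is precisely why a nonempty $p$ is sometimes needed.

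Granting the lemma, let $\sigma$ be the common parity of the repeats at positions $\ge 1$ and put $k\equiv\sigma+1\pmod 2$, choosing $k=2$ in place of $k=0$ exactly when there is a repeat at position $0$. A short case check shows that in every case $k\le 2$, that each repeat $j\ge k$ satisfies $j\not\equiv k\pmod 2$, and that $p=x_0\cdots x_{k-1}$ lies in $P$; the only point to note is that $k=2$ forces $x_0=x_1$, so $p\in\{00,11\}$. This yields the factorization $\mathbf{x}=p\mu(\mathbf{y})$, and $\mathbf{y}\in\eff$ because $\mu$ sends any $\frac pq$-power $u^nu'$ to the $\frac pq$-power $\mu(u)^n\mu(u')$ of the \emph{same} exponent, so a $\frac73$-power in $\mathbf{y}$ would appear inside the factor $\mu(\mathbf{y})$ of $\mathbf{x}$. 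Uniqueness follows at once, since the parity lemma admits only one valid phase and then $p$ and $\mathbf{y}$ are forced.

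Finally, to show that $5$ letters suffice I would argue that the valid phase is a function of $x_0\cdots x_4$ alone. The first repeat sits at a position $\le 3$; when it sits at position $0$, the next repeat—whose parity determines $\sigma$—either appears within the first five letters or is forced by $\frac73$-power-freeness (for instance, after $00101$ one cannot have $x_5=0$, as $001010$ contains $01010$, so $x_5=1$ and $\sigma=0$). Enumerating the finitely many length-$5$ factors that extend to infinite $\frac73$-power-free words and tabulating the resulting $p$ completes the argument, and the pair $00100$ versus $00101$, which force $p=00$ and $p=0$ respectively, shows that four letters do not suffice, so $5$ is optimal. I expect the parity lemma together with this closing finite verification to be the main effort.
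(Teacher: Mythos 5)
Your proof is correct, but it takes a genuinely different route from the paper's. The paper disposes of the first two claims in one line by citing the factorization theorem for \emph{finite} \gfg words from Karhum\"aki and Shallit, and handles the five-letter claim by exhaustive enumeration; it reproduces no argument at all. You instead give a self-contained proof whose engine is the parity lemma for repeats (positions $j$ with $x_j=x_{j+1}$): all repeats at positions $\ge 1$ share a parity, because an odd gap between consecutive repeats forces either $aaa$, the $\frac{7}{3}$-power $\overline{a}aa\overline{a}aa\overline{a}$, or an alternating factor of length $5$. This, together with the syndeticity of repeats (first one at position $\le 3$, gaps $\le 4$), pins down a unique phase $k=|p|\le 2$ and makes the five-letter determinacy (and even its optimality, via $00100$ vs.\ $00101$) transparent rather than a black-box case check. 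I verified the delicate points: the gap-$3$ case really does produce $(\overline{a}aa)^{7/3}$, the choice of $k=2$ over $k=0$ is forced exactly by a repeat at position $0$ (which is also what makes $p\in\{00,11\}$), and the one length-$5$ prefix where the deciding repeat lies outside the window ($00101$) is resolved because $x_5=0$ would create $01010$. Two spots are stated more tersely than they deserve: the opening equivalence should say $\mu(\Sigma^\omega)$ rather than $\mu(\eff)$ (membership of $\mathbf{y}$ in $\eff$ is a separate step, which you do supply), and ``the parity lemma admits only one valid phase'' merits the one-line check that two valid phases of opposite parity would leave a window of length $5$ with no repeat, while phases $0$ and $2$ are distinguished by the constraint $p\in P$. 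Neither is a gap. What your approach buys is independence from the external finite-word theorem and an explanation of \emph{why} $P$ and the bound $5$ are what they are; what the paper's approach buys is brevity and reuse of a result it needs to cite anyway.
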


\begin{proof}
The first two claims follow immediately from the version for finite
words, as given in \cite{Karhumaki&Shallit:2004}.  The last claim
follows from exhaustive enumeration of cases.  
\end{proof}

We can now iterate this factorization theorem to get

\begin{corollary}
Every infinite \gfg word $\bf x$ can be written 
uniquely in the form
\begin{equation}
 {\bf x} = p_{i_1} \mu( p_{i_2} \mu ( p_{i_3} \mu ( \cdots ) ) ) \label{qq}
\end{equation}
with $i_j \in \lbrace 0, 1, 2, 3, 4 \rbrace$ for 
$j \geq 1$, subject to the understanding
that if there exists $c$ such that $i_j = 0$ for
$j \geq c$, then we also need to specify whether the ``tail'' of the
expansion represents $\mu^\omega(0) = {\bf t}$ or $\mu^\omega(1) = \tbar$.
Furthermore, every truncated expansion
$$p_{i_1} \mu(p_{i_2} \mu (p_{i_3} \mu (\cdots p_{i_{n-1}} \mu(p_{i_n})
\cdots )))$$
is a prefix of $\bf x$, with the understanding that if
$i_n = 0 $, then we need to replace $p_{i_n}$ with either
$1$ (if the ``tail'' represents $\bf t$) or $3$ (if the ``tail''
represents $\tbar$).
\end{corollary}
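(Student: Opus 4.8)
The plan is to prove the corollary by iterating Theorem~1. I would set $\mathbf{x}^{(0)} = \mathbf{x}$, and for each $j \geq 1$ apply Theorem~1 to the word $\mathbf{x}^{(j-1)} \in \eff$ to obtain an index $i_j \in \lbrace 0,1,2,3,4 \rbrace$ and a word $\mathbf{x}^{(j)} \in \eff$ with $\mathbf{x}^{(j-1)} = p_{i_j}\mu(\mathbf{x}^{(j)})$. Since Theorem~1 guarantees $\mathbf{x}^{(j)} \in \eff$, the iteration never stalls and yields an infinite sequence $(i_j)_{j \geq 1}$. Unwinding the recursion gives, for every $n$, the exact identity
$$\mathbf{x} = p_{i_1}\mu(p_{i_2}\mu(\cdots p_{i_n}\mu(\mathbf{x}^{(n)})\cdots)),$$
which is precisely the asserted form with residual tail $\mathbf{x}^{(n)}$. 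For the prefix claim I would use that $\mu$ preserves the prefix order (if $u$ is a prefix of $v$, then $\mu(u)$ is a prefix of $\mu(v)$) together with the fact that $p_{i_n}$ is a prefix of $p_{i_n}\mu(\mathbf{x}^{(n)})$; applying these two observations layer by layer shows that the truncated expansion obtained by deleting $\mathbf{x}^{(n)}$ is a prefix of $\mathbf{x}$. Interpreting the infinite expression~(\ref{qq}) as the limit of these truncations, the prefix claim together with the fact that $\mu$ is length-increasing (so the truncation lengths tend to infinity whenever infinitely many $i_j \neq 0$) forces this limit to equal $\mathbf{x}$.

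Uniqueness I would establish by peeling off layers. Theorem~1 asserts that $p_{i_1}$ is determined by the first five letters of $\mathbf{x}$ and that the factorization $\mathbf{x} = p_{i_1}\mu(\mathbf{x}^{(1)})$ is unique; hence $i_1$ is forced, and since $\mu$ is injective on infinite words (reading $\mu(\mathbf{x}^{(1)})$ two letters at a time recovers $\mathbf{x}^{(1)}$ uniquely), the residual $\mathbf{x}^{(1)}$ is determined. Repeating this argument for $\mathbf{x}^{(1)}, \mathbf{x}^{(2)}, \ldots$ shows that the entire sequence $(i_j)$ is uniquely determined by $\mathbf{x}$, modulo the tail ambiguity treated below.

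The step I expect to be the main obstacle is the degenerate ``eventually-zero'' case, in which $i_j = 0$ for all $j \geq c$. Here $p_{i_j} = \epsilon$, so $\mathbf{x}^{(j-1)} = \mu(\mathbf{x}^{(j)})$ for every $j \geq c$, and therefore $\mathbf{x}^{(c-1)} = \mu^k(\mathbf{x}^{(c-1+k)})$ for all $k \geq 0$. I would argue that $\mathbf{x}^{(c-1)}$ is then a fixed point of $\mu$: since $\mu$ preserves the initial letter, all the words $\mathbf{x}^{(j)}$ with $j \geq c-1$ share a common first letter $a \in \lbrace 0,1 \rbrace$, and letting $k \to \infty$ forces $\mathbf{x}^{(c-1)} = \mu^\omega(a)$, which is $\mathbf{t}$ when $a = 0$ and $\tbar$ when $a = 1$. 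This is exactly the information that cannot be recovered from the all-zero index sequence, which is why the tail must be specified separately, and it is the source of the non-uniqueness resolved by the corollary's ``understanding.'' Finally I would reconcile this with the prefix claim: when $i_n = 0$ the innermost factor $p_{i_n} = \epsilon$ contributes no letter, so to retain the correct marginal symbol one replaces the index $0$ by $1$ (giving $p_1 = 0$) if the tail represents $\mathbf{t}$, or by $3$ (giving $p_3 = 1$) if it represents $\tbar$, in agreement with the common first letter $a$; because $p_1$ (respectively $p_3$) is a prefix of $\mathbf{x}^{(n)}$, the resulting truncation remains a prefix of $\mathbf{x}$ by the same prefix-preservation argument.
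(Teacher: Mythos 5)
Your proposal is correct and follows the same route as the paper: iterate Theorem~1 to peel off one $p_{i_j}$ at a time, with uniqueness coming from the fact that each $p_{i_j}$ is determined by the first five letters of the associated word. The paper's own proof is a single sentence that leaves all of this implicit, so your write-up is simply a fully detailed version of the intended argument, including the prefix claim and the eventually-zero tail case that the paper does not spell out.
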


\begin{proof}
The form (\ref{qq}) is unique, since each $p_i$ is uniquely determined
by the first 5 characters of the associated word.
\end{proof}

Thus, we can associate each infinite binary \gfg
word $\bf x$ with the
essentially unique infinite sequence
of indices ${\bf i} := (i_j)_{j \geq 0}$ coding elements in $P$,
as specified by (\ref{qq}).  If $\bf i$ ends in $0^\omega$, then
we need an additional element (either $1$ or $3$) to disambiguate
between $\bf t$ and $\tbar$ as the ``tail''.  
In our notation, we separate this additional element with a
semicolon so that, for example, the string $000\cdots; 1$ represents
$\bf t$ and $000\cdots; 3$ represents $\tbar$.

Of course, not every possible sequence of $(i_j)_{j \geq 1}$ of indices
corresponds to an
infinite \gfg word.  For example, every infinite word coded
by an infinite sequence beginning $400\cdots$ has a $\frac{7}{3}$-power.
Our goal is to characterize precisely,
using a finite automaton, those infinite sequences corresponding to
\gfg words.

Next, we recall some connections between the morphism $\mu$ and the
powers over the binary alphabet. Below $x$ is an arbitrary finite or
right-infinite word.

\begin{lemma}
\label{squares}
If the word $\mu(x)$ has a prefix $zz$, then the word $x$ has the prefix $\mu^{-1}(z)\mu^{-1}(z)$.
\end{lemma}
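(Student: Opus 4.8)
The plan is to exploit the rigid two-letter block structure of $\mu(x)$. Writing $x = x_0 x_1 x_2 \cdots$ with $x_i \in \Sigma$, we have $\mu(x) = (x_0\overline{x_0})(x_1\overline{x_1})(x_2\overline{x_2})\cdots$, so if $w = \mu(x) = w_0 w_1 w_2 \cdots$ then $w_{2i} = x_i$ and $w_{2i+1} = \overline{x_i}$ for every $i$; in particular each block $w_{2i}w_{2i+1}$ consists of two distinct letters. Setting $n = |z|$, the hypothesis that $zz$ is a prefix of $w$ is exactly the system $w_j = w_{j+n}$ for $0 \le j < n$. The argument then splits according to the parity of $n$, and I expect the parity analysis to be the crux: the whole point is that a square prefix of $\mu(x)$ cannot have odd length, after which the claim about $\mu^{-1}$ even makes sense.

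First I would rule out $n$ odd, which is the main obstacle. Writing $n = 2m+1$ and translating each equation $w_j = w_{j+n}$ into a relation on the letters $x_i$ via the block formulas above, the even indices $j = 2k$ give $x_{k+m} = \overline{x_k}$ for $0 \le k \le m$, while the odd indices $j = 2k+1$ give $x_{k+m+1} = \overline{x_k}$ for $0 \le k \le m-1$. Comparing the two families, for $0 \le k \le m-1$ one gets $\overline{x_{k+1}} = x_{k+m+1} = \overline{x_k}$, hence $x_{k+1} = x_k$; thus $x_0 = x_1 = \cdots = x_m$. But then the $k=0$ instance of the even relation reads $x_m = \overline{x_0} = \overline{x_m}$, forcing a letter to equal its own complement, a contradiction. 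Hence $n$ is even.

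With $n = 2m$ even, the prefix $zz$ occupies exactly the first $2m$ blocks of $w$. The first copy of $z$ is the concatenation of the blocks $x_0\overline{x_0}, \ldots, x_{m-1}\overline{x_{m-1}}$, so $z = \mu(x_0 \cdots x_{m-1})$ is genuinely in the image of $\mu$ and $\mu^{-1}(z) = x_0 \cdots x_{m-1}$ is well defined. The second copy of $z$ is the next $m$ blocks, namely $\mu(x_m \cdots x_{2m-1})$. Since $zz$ is a prefix, $\mu(x_m \cdots x_{2m-1}) = z = \mu(x_0 \cdots x_{m-1})$, and the injectivity of $\mu$ yields $x_m \cdots x_{2m-1} = x_0 \cdots x_{m-1}$. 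Therefore $x$ begins with $(x_0 \cdots x_{m-1})(x_0 \cdots x_{m-1}) = \mu^{-1}(z)\mu^{-1}(z)$, as required. The same counting is valid verbatim when $x$ is finite, since then $2n \le |\mu(x)| = 2|x|$ guarantees that $\mu^{-1}(z)\mu^{-1}(z)$ fits as a prefix of $x$.
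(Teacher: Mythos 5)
Your proof is correct, but it takes a different route from the paper: the paper disposes of this lemma in one line by citing Lemma 1.7.2 of Allouche and Shallit's \emph{Automatic Sequences}, whereas you prove it from first principles. Your parity argument is essentially the classical one underlying that cited lemma (going back to Thue): indexing $\mu(x)$ by blocks $x_i\overline{x_i}$, translating the square condition $w_j=w_{j+n}$ into relations on the $x_i$, and deriving $x_k=\overline{x_k}$ when $n$ is odd. The index bookkeeping checks out, including the boundary case $m=0$ (where the even relation at $k=0$ already gives the contradiction) and the observation that $|\mu(x)|\geq 2n$ guarantees all the letters $x_0,\dots,x_{n-1}$ you use actually exist; the even case then follows cleanly from injectivity of $\mu$. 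What your version buys is self-containedness, and it makes explicit the fact the paper's later arguments silently rely on (e.g.\ in Lemma~4), namely that a square prefix of a $\mu$-image must have even length so that $\mu^{-1}(z)$ is well defined. What the citation buys is brevity and a pointer to the standard toolkit. Either is acceptable; yours would just be longer than the paper intends for a ``recalled'' fact.
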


\begin{proof}
Follows immediately from \cite[Lemma 1.7.2]{Allouche&Shallit:2003}.
\end{proof}

\begin{lemma} 
\label{exps}
(1) For any real $\beta>1$, we have $\exp(x)=\beta$ iff $\exp(\mu(x))=\beta$.\\
(2) For any real $\beta\ge2^+$, the word $x$ is $\beta$-power-free iff $\mu(x)$ is $\beta$-power-free.
\end{lemma}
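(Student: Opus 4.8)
The plan is to prove both parts from the \emph{$2$-uniformity} of $\mu$ together with the synchronization structure of $\mu$-images, using Lemma~\ref{squares} as the engine for the harder direction. First I would dispose of the easy direction of each equivalence. Since $\mu$ maps each letter to a word of length $2$, a factor $u$ of $x$ with period $q$ and length $\ell$ is carried to the factor $\mu(u)$ of $\mu(x)$ with period $2q$ and length $2\ell$, so $\exp(\mu(u)) \ge \ell/q = \exp(u)$, and applying $\mu$ to a witnessing power shows that every exponent realized in $x$ is realized in $\mu(x)$. This immediately gives the implication ``$\mu(x)$ is $\beta$-power-free $\Rightarrow$ $x$ is $\beta$-power-free'' in part (2) and one of the two inequalities in part (1).

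For the reverse direction I would start from a factor $w$ of $\mu(x)$ whose exponent $\alpha$ is large (in the regime that matters, $\alpha > 2$, which is exactly what $\beta \ge 2^+$ forces) and show that its smallest period $p$ is \emph{even}. The key observation is the synchronization property of images under $\mu$: because $\mu(0)=01$ and $\mu(1)=10$, every occurrence of $00$ or $11$ in $\mu(x)$ begins at an odd position, since it must straddle a block boundary. As $\alpha > 2$ forces $|w| > 2p$, the word $w$ either contains such a factor $00$ or $11$, in which case periodicity produces two occurrences exactly $p$ apart, both at odd absolute positions, forcing $p$ even; or $w$ is alternating, in which case $p=2$. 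Either way $p$ is even.

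With $p$ even, and granting for the moment that the repetition can be taken block-aligned (addressed below), I would locate a square $zz$ with $|z|=p$ beginning at an even position of $\mu(x)$, so that $z=\mu(v)$ with $v=\mu^{-1}(z)$; Lemma~\ref{squares} then yields the square $\mu^{-1}(z)\mu^{-1}(z)$ inside the corresponding factor of $x$. This desubstitution halves both the period and the length, hence preserves the exponent, producing in $x$ a factor of exponent $\alpha$. Together with the easy direction this gives $\exp(x)=\exp(\mu(x))$ in part (1), and part (2) follows because, for $\beta \ge 2^+$, being $\beta$-power-free is exactly the absence of a factor of exponent $\ge \beta$.

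The step I expect to be the main obstacle is the block-alignment bookkeeping: a priori the high-exponent factor $w$ may begin at an \emph{odd} position of $\mu(x)$, so its even-period square need not be a $\mu$-image and Lemma~\ref{squares} does not apply verbatim. I would resolve this using the synchronization property again, since the position of any $00$ or $11$ inside $w$ pins down the block parity, together with the freedom to pass to the leftmost maximal repetition realizing the exponent, which one checks is block-aligned. The purely alternating case $z \in \{01,10\}^{*}$ must be treated on its own, as it corresponds to a run of identical letters in $x$ rather than to a genuine $\mu$-image. Controlling this alignment carefully, especially near the boundary exponent $2$, is where the real work lies.
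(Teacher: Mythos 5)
The paper does not actually prove this lemma --- it simply cites Shur and Karhum\"aki--Shallit --- so your argument is by necessity a different route; in substance it is the standard proof from those references (2-uniformity of $\mu$ for the easy direction; synchronization of $00$ and $11$ at odd positions to force the period of any factor of $\mu(x)$ of exponent greater than $2$ to be even; desubstitution via Lemma~\ref{squares}). The overall strategy is sound, but the step that carries the exponent back to $x$ does not close as written. Locating a block-aligned square $zz$ with $|z|=p$ and desubstituting it produces in $x$ only a factor of exponent $2$ (you halve a word of length $2p$, not one of length $\alpha p$), so ``this desubstitution \dots\ preserves the exponent'' is not yet justified for $\alpha>2$; naively trimming the $\alpha$-power to the nearest block boundaries loses up to one letter at each end and can drop the exponent below $\alpha$. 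What you actually need is to desubstitute the entire \emph{maximal} $p$-periodic run containing $w$, and for that you must prove the run is block-aligned at both ends: if it began or ended in mid-block, the evenness of $p$ together with the fact that the two letters of a $\mu$-block determine one another would let you extend the run by one letter, contradicting maximality. This forced-letter argument is exactly the content of the paper's Lemma~\ref{cor} (stated there for prefixes); your sketch says ``one checks is block-aligned,'' but that check is the heart of the hard direction, and once you have it the worry about $w$ itself starting at an odd position disappears, since you work with the maximal run rather than with $w$.

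A second, smaller issue: part (1) is asserted for every $\beta>1$, whereas your argument (as you acknowledge) only operates for exponents greater than $2$. In the range $1<\beta\le 2$ the claim is genuinely delicate: for instance $\exp(\mu(01))=\exp(0110)=4/3$ while $01$ admits no exponent exceeding $1$, so the right-to-left implication of (1) fails at $\beta=4/3$ without an additional hypothesis such as $\exp(x)\ge 2$ (which the cited propositions in effect carry, and which holds in every use the paper makes of the lemma). You should either import that hypothesis explicitly or treat the low-exponent regime separately rather than setting it aside as outside ``the regime that matters.''
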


\begin{proof}
For (1), see \cite[Prop.~1.1]{Shur:2000}. For (2), see \cite[Prop.~1.2]{Shur:2000} or \cite[Thm.~5]{Karhumaki&Shallit:2004}.
\end{proof}

\begin{lemma} \label{cor}
Let $p$ be a positive integer. If the longest $p$-periodic prefix of
the word $\mu(x)$ has the exponent $\beta \geq 2$, then the longest
$(p/2)$-periodic prefix of $x$ also has the exponent $\beta$.
\end{lemma}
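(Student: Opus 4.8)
The plan is to exploit the explicit letter-level description of $\mu$: writing letters as elements of $\{0,1\}$ and $\overline{a}$ for the complement, we have $\mu(x)_{2i} = x_i$ and $\mu(x)_{2i+1} = \overline{x_i}$ for every $i$. Let $w$ be the longest $p$-periodic prefix of $\mu(x)$ and put $L = |w|$, so that $L = \beta p \geq 2p$ and $\mu(x)_j = \mu(x)_{j+p}$ for all $0 \le j \le L-p-1$. I would first argue that $p$ must be even, and then set up an exponent-preserving correspondence between the $q$-periodic prefixes of $x$ and the even-length $2q$-periodic prefixes of $\mu(x)$, where $q = p/2$.

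First I would show $p$ is even. If $p=1$, then $w$ would begin with two equal letters, contradicting $\mu(x)_0 = x_0 \neq \overline{x_0} = \mu(x)_1$. If $p \geq 3$ is odd, write $p = 2d+1$. Reading the periodicity relation $\mu(x)_j = \mu(x)_{j+p}$ at even and at odd $j$ and substituting the identities above, the two parities yield respectively $x_k \neq x_{k+d}$ and $x_k \neq x_{k+d+1}$ over overlapping ranges of $k$; since the alphabet is binary, these force $x_{k+d} = x_{k+d+1}$, i.e.\ a constant block of $x$ starting at position $d$. The hypothesis $L \geq 2p$ makes this block long enough to contain two positions at distance $d$, which then contradicts the anti-period relation $x_k \neq x_{k+d}$. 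Hence $p = 2q$ is even. This parity step is the main obstacle: everything afterwards is bookkeeping, but here one must combine the two anti-periodicities carefully and check that the ranges still overlap when $\beta$ is as small as $2$.

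It remains to transfer periods through $\mu$. I would first check that $L$ is even. If $L$ were odd, then $L-1$ and $L-1-p$ are both even, so the extreme instance $\mu(x)_{L-1-p} = \mu(x)_{L-1}$ of the periodicity relation reads $x_{(L-1)/2 - q} = x_{(L-1)/2}$; but this is exactly the condition needed to extend $w$ by the letter at position $L$ (an odd position, where the two relevant letters are the complements of $x_{(L-1)/2-q}$ and $x_{(L-1)/2}$), contradicting maximality of $w$. So $L = 2\ell$ and $w = \mu(v)$, where $v$ is the length-$\ell$ prefix of $x$. Reading $\mu(x)_{2i} = \mu(x)_{2i+2q}$ gives $v_i = v_{i+q}$, so $v$ is a $q$-periodic prefix of $x$ of length $\ell$ and exponent $\ell/q = L/p = \beta$. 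Finally, $v$ is the longest such prefix: if $x$ had a strictly longer $q$-periodic prefix $v'$, then from $v'_i = v'_{i+q}$ and the block identities $\mu(v')$ would be a $2q$-periodic prefix of $\mu(x)$ of length $2|v'| > L$, contradicting the maximality of $w$. Therefore the longest $(p/2)$-periodic prefix of $x$ is $v$, of exponent $\beta$, as claimed.
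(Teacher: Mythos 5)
Your proof is correct and follows essentially the same route as the paper's: show the period $p$ is even, show the length of the maximal $p$-periodic prefix is even via the same extend-by-one-letter maximality argument, pull the prefix back through $\mu$, and transfer maximality. The only divergence is that you re-derive the evenness of $p$ by a hands-on anti-periodicity computation where the paper gets it in one line from Lemma~\ref{squares} (since $\beta\ge 2$ the prefix contains the square $zz$, forcing $|z|=p$ even), and you push a hypothetical longer $(p/2)$-periodic prefix forward through $\mu$ where the paper instead cites Lemma~\ref{exps}\,(1); both substitutions are sound.
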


\begin{proof}
Let $zzz'$ (where $|z|=p$ and $z'$ is a possibly empty prefix of
$z^{\omega}$) be the longest $p$-periodic prefix of $\mu(x)$.
Lemma~\ref{squares} implies that $p$ is even. If $|z'|$ is odd, let $a$
be the last letter of $z'$. The next letter $b$ in $\mu(x)$ is fixed by
the definition of $\mu$: $b\ne a$. By the definition of period, another $a$
occurs $p$ symbols to the left of the last letter of $z'$. Since $p$ is
even, this $a$ also fixes the next letter $b$. Hence the prefix
$zzz'b$ of $\mu(x)$ is $p$-periodic, contradicting the
definition of $zzz'$. Thus $|z'|$ is even. Therefore $x$ begins with
the $\beta$-power $\mu^{-1}(z)\mu^{-1}(z)\mu^{-1}(z')$ of period
$p/2$.

It remains to note that if $x$ has a $(p/2)$-periodic prefix $y$ of
exponent $\alpha>\beta$, then by Lemma~\ref{exps}\,(1), the
$p$-periodic prefix $\mu(y)$ of $\mu(x)$ also has the exponent $\alpha$,
contradicting the hypotheses of the lemma.  
\end{proof}

\section{The main result}

For each finite word $w \in \lbrace 0,1,2,3,4 \rbrace^*$,
$w = i_1 i_2 \cdots i_r$, and an infinite word
${\bf x} \in \lbrace 0, 1 \rbrace^\omega$, we
define 
\begin{align*}
C_w ({\bf x}) &= p_{i_1} \mu( p_{i_2} \mu ( p_{i_3} \mu( \cdots {\bf x} \cdots))) \text{ and}\\
F_w &= \lbrace {\bf x} \in \Sigma^\omega \ : \  C_w ({\bf x}) \in \eff \rbrace .
\end{align*}

Note that $F_w\subseteq \eff$ for any $w$ in view of Lemma~\ref{exps}\,(2).

\begin{figure}[htp]
\centerline{ 
\unitlength=1mm
\begin{picture}(100,180)(-1,-90)
\gasset{Nw=6,Nh=4.5,AHnb=0,Nframe=n}
\node(eps)(0,0){\small$F_{\epsilon}$} 
\node(0)(12,0){\small$F_0$}
\node(0a)(18.5,0){\small$=F_{\epsilon}$}
\node(1)(12,61){\small$F_1$}
\node(2)(12,16){\small$F_2$}
\drawedge(eps,0){}
\drawedge(eps,1){}
\drawedge(eps,2){}
\node(10)(30,73){\small$F_{10}$} 
\node(10a)(36.5,73){\small$=F_{\epsilon}$}
\node(11)(30,67){\small$F_{11}$}
\node(12)(30,61){\small$F_{12}$} 
\node(12a)(36.5,61){\small$=F_2$}
\node(13)(30,55){\small$F_{13}$}
\node(14)(30,49){\small$F_{14}$} 
\node(14a)(36.5,49){\small$=\varnothing$}
\node(21)(30,28){\small$F_{21}$} 
\node(21a)(36.5,28){\small$=\varnothing$} 
\node(22)(30,22){\small$F_{22}$}
\node(22a)(36.5,22){\small$=\varnothing$}
\node(20)(30,16){\small$F_{20}$} 
\node(23)(30,10){\small$F_{23}$}
\node(23a)(37,10){\small$=F_{13}$}
\node(24)(30,4){\small$F_{24}$} 
\node(24a)(36.5,4){\small$=\varnothing$}
\drawedge(1,10){}
\drawedge(1,11){}
\drawedge(1,12){}
\drawedge(1,13){}
\drawedge(1,14){}
\drawedge(2,20){}
\drawedge(2,21){}
\drawedge(2,22){}
\drawedge(2,23){}
\drawedge(2,24){}
\node[Nw=7](110)(55,88){\small$F_{110}$} 
\node(110a)(62,88){\small$=F_3$}
\node[Nw=7](111)(55,82){\small$F_{111}$} 
\node(111a)(62.5,82){\small$=F_{11}$}
\node[Nw=7](112)(55,76){\small$F_{112}$} 
\node(112a)(62,76){\small$=F_2$}
\node[Nw=7](113)(55,70){\small$F_{113}$} 
\node(113a)(62.5,70){\small$=F_{13}$}
\node[Nw=7](114)(55,64){\small$F_{114}$} 
\node(114a)(62,64){\small$=\varnothing$}
\node[Nw=7](130)(55,58){\small$F_{130}$} 
\node[Nw=7](131)(55,52){\small$F_{131}$} 
\node(131a)(62.5,52){\small$=F_{31}$}
\node[Nw=7](132)(55,46){\small$F_{132}$} 
\node(132a)(62,46){\small$=\varnothing$}
\node[Nw=7](133)(55,40){\small$F_{133}$} 
\node(133a)(62,40){\small$=\varnothing$}
\node[Nw=7](134)(55,34){\small$F_{134}$} 
\node(134a)(62,34){\small$=\varnothing$}
\node[Nw=7](200)(55,28){\small$F_{200}$} 
\node(200a)(62,28){\small$=\varnothing$}
\node[Nw=7](201)(55,22){\small$F_{201}$} 
\node(201a)(62,22){\small$=\varnothing$}
\node[Nw=7](202)(55,16){\small$F_{202}$} 
\node(202a)(62,16){\small$=\varnothing$}
\node[Nw=7](203)(55,10){\small$F_{203}$} 
\node[Nw=7](204)(55,4){\small$F_{204}$} 
\node(204a)(62,4){\small$=F_{4}$}
\drawedge[curvedepth=-3.5](11,110){}
\drawedge[curvedepth=-2.7](11,111){}
\drawedge[curvedepth=-2](11,112){}
\drawedge[curvedepth=-1](11,113){}
\drawedge(11,114){}
\drawedge(13,130){}
\drawedge[curvedepth=1](13,131){}
\drawedge[curvedepth=2](13,132){}
\drawedge[curvedepth=2.7](13,133){}
\drawedge[curvedepth=3.5](13,134){}
\drawedge[curvedepth=-1.5](20,200){}
\drawedge[curvedepth=-0.8](20,201){}
\drawedge(20,202){}
\drawedge[curvedepth=0.8](20,203){}
\drawedge[curvedepth=1.5](20,204){}
\node[Nw=8](1300)(80,73){\small$F_{1300}$} 
\node(1300a)(89,73){\small$=F_{130}$}
\node[Nw=8](1301)(80,67){\small$F_{1301}$}
\node(1301a)(88,67){\small$=F_1$}
\node[Nw=8](1302)(80,61){\small$F_{1302}$} 
\node(1302a)(88,61){\small$=F_2$}
\node[Nw=8](1303)(80,55){\small$F_{1303}$}
\node(1303a)(87.5,55){\small$=\varnothing$}
\node[Nw=8](1304)(80,49){\small$F_{1304}$} 
\node(1304a)(87.5,49){\small$=\varnothing$}
\node[Nw=8](2030)(80,28){\small$F_{2030}$} 
\node(2030a)(89,28){\small$=F_{310}$} 
\node[Nw=8](2031)(80,22){\small$F_{2031}$}
\node(2031a)(87.5,22){\small$=\varnothing$}
\node[Nw=8](2032)(80,16){\small$F_{2032}$} 
\node(2032a)(87.5,16){\small$=\varnothing$}
\node[Nw=8](2033)(80,10){\small$F_{2033}$} 
\node(2033a)(88.5,10){\small$=F_{33}$}
\node[Nw=8](2034)(80,4){\small$F_{2034}$} 
\node(2034a)(88,4){\small$=F_4$}
\drawedge[curvedepth=-1.5](130,1300){}
\drawedge[curvedepth=-0.8](130,1301){}
\drawedge(130,1302){}
\drawedge[curvedepth=0.8](130,1303){}
\drawedge[curvedepth=1.5](130,1304){}
\drawedge[curvedepth=-2](203,2030){}
\drawedge[curvedepth=-1.2](203,2031){}
\drawedge[curvedepth=-0.6](203,2032){}
\drawedge(203,2033){}
\drawedge[curvedepth=0.8](203,2034){}
\node(3)(12,-61){\small$F_3$}
\node(4)(12,-16){\small$F_4$}
\drawedge(eps,3){}
\drawedge(eps,4){}
\node(30)(30,-73){\small$F_{30}$} 
\node(30a)(36.5,-73){\small$=F_{\epsilon}$}
\node(33)(30,-67){\small$F_{33}$}
\node(34)(30,-61){\small$F_{34}$} 
\node(34a)(36.5,-61){\small$=F_4$}
\node(31)(30,-55){\small$F_{31}$}
\node(32)(30,-49){\small$F_{32}$} 
\node(32a)(36.5,-49){\small$=\varnothing$}
\node(43)(30,-28){\small$F_{43}$} 
\node(43a)(36.5,-28){\small$=\varnothing$} 
\node(44)(30,-22){\small$F_{44}$}
\node(44a)(36.5,-22){\small$=\varnothing$}
\node(40)(30,-16){\small$F_{40}$} 
\node(41)(30,-10){\small$F_{41}$}
\node(41a)(37,-10){\small$=F_{31}$}
\node(42)(30,-4){\small$F_{42}$} 
\node(42a)(36.5,-4){\small$=\varnothing$}
\drawedge(3,30){}
\drawedge(3,33){}
\drawedge(3,34){}
\drawedge(3,31){}
\drawedge(3,32){}
\drawedge(4,40){}
\drawedge(4,43){}
\drawedge(4,44){}
\drawedge(4,41){}
\drawedge(4,42){}
\node[Nw=7](330)(55,-88){\small$F_{330}$} 
\node(330a)(62,-88){\small$=F_1$}
\node[Nw=7](333)(55,-82){\small$F_{333}$} 
\node(333a)(62.5,-82){\small$=F_{33}$}
\node[Nw=7](334)(55,-76){\small$F_{334}$} 
\node(112a)(62,-76){\small$=F_4$}
\node[Nw=7](331)(55,-70){\small$F_{331}$} 
\node(331a)(62.5,-70){\small$=F_{31}$}
\node[Nw=7](332)(55,-64){\small$F_{332}$} 
\node(332a)(62,-64){\small$=\varnothing$}
\node[Nw=7](310)(55,-58){\small$F_{310}$} 
\node[Nw=7](313)(55,-52){\small$F_{313}$} 
\node(313a)(62.5,-52){\small$=F_{13}$}
\node[Nw=7](314)(55,-46){\small$F_{314}$} 
\node(314a)(62,-46){\small$=\varnothing$}
\node[Nw=7](311)(55,-40){\small$F_{311}$} 
\node(311a)(62,-40){\small$=\varnothing$}
\node[Nw=7](312)(55,-34){\small$F_{312}$} 
\node(312a)(62,-34){\small$=\varnothing$}
\node[Nw=7](400)(55,-28){\small$F_{400}$} 
\node(400a)(62,-28){\small$=\varnothing$}
\node[Nw=7](403)(55,-22){\small$F_{403}$} 
\node(403a)(62,-22){\small$=\varnothing$}
\node[Nw=7](404)(55,-16){\small$F_{404}$} 
\node(404a)(62,-16){\small$=\varnothing$}
\node[Nw=7](401)(55,-10){\small$F_{401}$} 
\node[Nw=7](402)(55,-4){\small$F_{402}$} 
\node(402a)(62,-4){\small$=F_{2}$}
\drawedge[curvedepth=3.5](33,330){}
\drawedge[curvedepth=2.7](33,333){}
\drawedge[curvedepth=2](33,334){}
\drawedge[curvedepth=1](33,331){}
\drawedge(33,332){}
\drawedge(31,310){}
\drawedge[curvedepth=-1](31,313){}
\drawedge[curvedepth=-2](31,314){}
\drawedge[curvedepth=-2.7](31,311){}
\drawedge[curvedepth=-3.5](31,312){}
\drawedge[curvedepth=1.5](40,400){}
\drawedge[curvedepth=0.8](40,403){}
\drawedge(40,404){}
\drawedge[curvedepth=-0.8](40,401){}
\drawedge[curvedepth=-1.5](40,402){}
\node[Nw=8](3100)(80,-73){\small$F_{3100}$} 
\node(3100a)(89,-73){\small$=F_{310}$}
\node[Nw=8](3103)(80,-67){\small$F_{3103}$}
\node(3103a)(88,-67){\small$=F_3$}
\node[Nw=8](3104)(80,-61){\small$F_{3104}$} 
\node(3104a)(88,-61){\small$=F_4$}
\node[Nw=8](3101)(80,-55){\small$F_{3101}$}
\node(3101a)(87.5,-55){\small$=\varnothing$}
\node[Nw=8](3102)(80,-49){\small$F_{3102}$} 
\node(3102a)(87.5,-49){\small$=\varnothing$}
\node[Nw=8](4010)(80,-28){\small$F_{4010}$} 
\node(4010a)(89,-28){\small$=F_{130}$} 
\node[Nw=8](4013)(80,-22){\small$F_{4013}$}
\node(4013a)(87.5,-22){\small$=\varnothing$}
\node[Nw=8](4014)(80,-16){\small$F_{4014}$} 
\node(4014a)(87.5,-16){\small$=\varnothing$}
\node[Nw=8](4011)(80,-10){\small$F_{4011}$} 
\node(4011a)(88.5,-10){\small$=F_{11}$}
\node[Nw=8](4012)(80,-4){\small$F_{4012}$} 
\node(4012a)(88,-4){\small$=F_2$}
\drawedge[curvedepth=1.5](310,3100){}
\drawedge[curvedepth=0.8](310,3103){}
\drawedge(310,3104){}
\drawedge[curvedepth=-0.8](310,3101){}
\drawedge[curvedepth=-1.5](310,3102){}
\drawedge[curvedepth=2](401,4010){}
\drawedge[curvedepth=1.2](401,4013){}
\drawedge[curvedepth=0.6](401,4014){}
\drawedge(401,4011){}
\drawedge[curvedepth=-0.8](401,4012){}
\end{picture} }
\caption{\small\sl Equations between languages $F_w$.} \label{tree}
\end{figure}
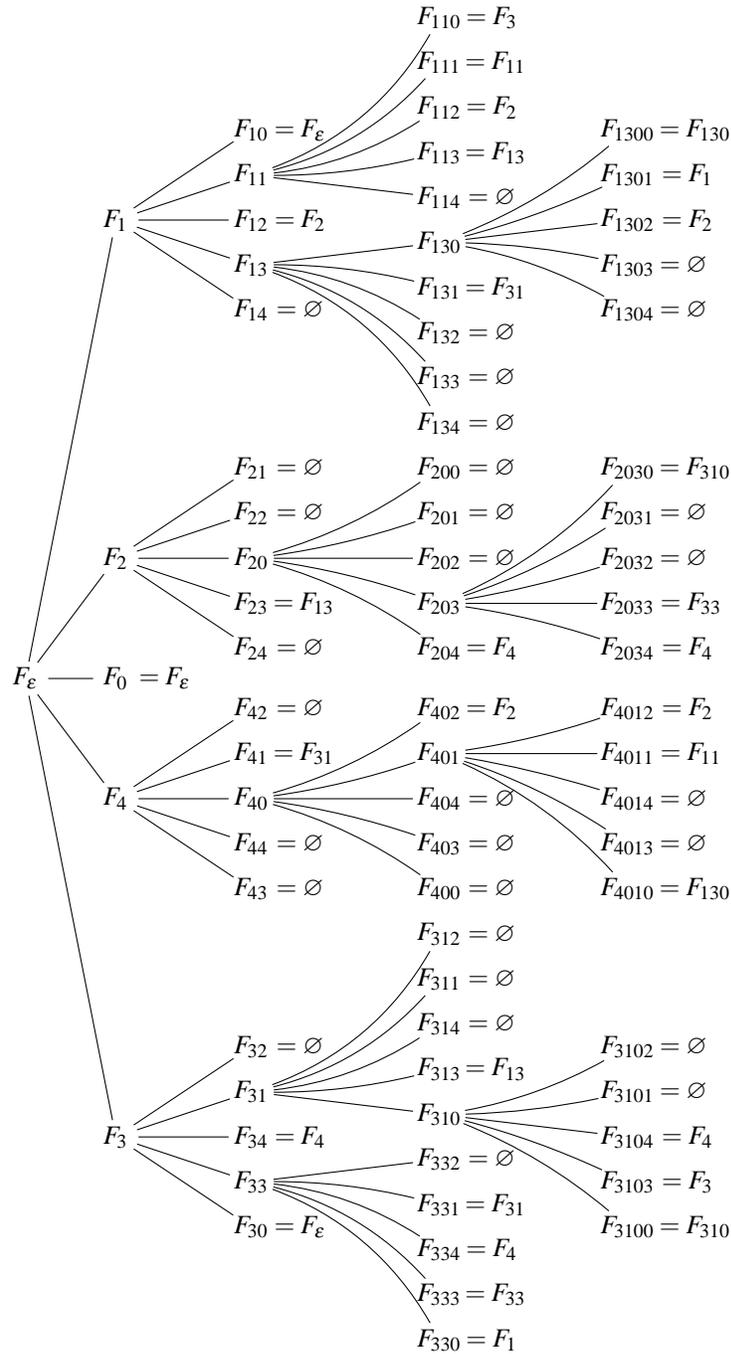

\begin{lemma} \label{main2}
The sets $F_w$ satisfy the equalities listed in Fig.~\ref{tree}. In particular, there are only 15 different nonempty sets $F_w$; they are
$$
F_{\epsilon},F_1,F_{11},F_{13},F_{130},F_2,F_{20},F_{203},F_3,F_{31},F_{310},F_{33},F_4,F_{40},F_{401}.
$$
\end{lemma}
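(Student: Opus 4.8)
The plan is to exploit the recursion behind the definition of $C_w$. For any index $i$ and any word $w$ one has $C_{iw}(\mathbf{x}) = p_i\,\mu(C_w(\mathbf{x}))$; hence if $C_{iw}(\mathbf{x})\in\eff$ then its suffix $\mu(C_w(\mathbf{x}))$ is \gfe, so $C_w(\mathbf{x})\in\eff$ by Lemma~\ref{exps}(2). This yields the basic inclusion $F_{iw}\subseteq F_w$ together with the working description
\[
F_{iw}=\{\mathbf{x}\in F_w : p_i\,\mu(C_w(\mathbf{x}))\in\eff\}.
\]
Thus every entry of Fig.~\ref{tree} asserts something about a single ``prepend $p_i$, then apply $\mu$'' step, and the whole lemma reduces to deciding, for $\mathbf{z}\in\eff$ of a prescribed form, whether $p_i\mu(\mathbf{z})\in\eff$.

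\emph{Reduction to bounded prefixes.} Since $\mu(\mathbf{z})$ is already \gfe, any new $\frac{7}{3}$-power in $p_i\mu(\mathbf{z})$ must begin inside the prepended block $p_i$ (so within the first two positions), i.e.\ essentially as a prefix. I would show that such a prefix power has bounded length: if it had period $q$, the periodicity of $p_i\mu(\mathbf{z})$ would force a $q$-periodic prefix of $\mu(\mathbf{z})$ of exponent $\ge 2$, and Lemma~\ref{cor} would then produce a $(q/2)$-periodic prefix of $\mathbf{z}$ of the same exponent. Iterating this descent halves the period while preserving the exponent, until the period is so small that a prefix of exponent close to $\frac{7}{3}$ becomes a short forbidden power (a cube, say), contradicting membership in $\eff$. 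Consequently whether $p_i\mu(\mathbf{z})\in\eff$ is decided by a prefix of $\mathbf{z}$ of bounded length, and by iterating the factorization Corollary the admissible prefixes of $C_w(\mathbf{x})$ fall into finitely many classes. This is the step I expect to be the main obstacle: the descent stops at exponents just below $\frac{7}{3}$, so it demands exact bookkeeping of the periodic prefixes rather than a crude length estimate.

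\emph{Symmetry.} Before grinding cases I would halve the work using the complementation coding, which is a bijection of $\eff$ commuting with $\mu$ and sending $p_1\leftrightarrow p_3$, $p_2\leftrightarrow p_4$, $p_0\mapsto p_0$. Writing $\sigma$ for the induced involution on index words, one gets $\overline{C_w(\mathbf{x})}=C_{\sigma(w)}(\overline{\mathbf{x}})$, so $\mathbf{x}\mapsto\overline{\mathbf{x}}$ is a bijection $F_w\to F_{\sigma(w)}$. Hence emptiness and every equality are preserved by $\sigma$, and the lower half of Fig.~\ref{tree} (words beginning with $3$ or $4$) is exactly the $\sigma$-image of the upper half. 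The fifteen sets then pair up as $F_{\epsilon}$ together with $F_1\!\leftrightarrow\!F_3$, $F_{11}\!\leftrightarrow\!F_{33}$, $F_{13}\!\leftrightarrow\!F_{31}$, $F_{130}\!\leftrightarrow\!F_{310}$, $F_2\!\leftrightarrow\!F_4$, $F_{20}\!\leftrightarrow\!F_{40}$, $F_{203}\!\leftrightarrow\!F_{401}$.

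\emph{Finishing the case analysis.} It remains to verify the upper half of Fig.~\ref{tree}. For the empty entries I would compute the fixed initial segment of $C_w(\mathbf{x})$ (independent of $\mathbf{x}$ up to the first letter of the innermost $\mu^k(\mathbf{x})$) and exhibit an explicit factor of exponent $\ge\frac{7}{3}$, typically $000$ or $01010$, so that no $\mathbf{x}$ can place $C_w(\mathbf{x})$ in $\eff$. For each equality $F_w=F_{w'}$ I would use the bounded-prefix reduction: both $C_w(\mathbf{x})$ and $C_{w'}(\mathbf{x})$ force $\mathbf{x}\in\eff$ and impose the same finite prefix condition, which one checks by reducing, through the displayed formula and the equalities already established lower in the tree, to a handful of statements of the form ``$\mathbf{z}\in\eff$ with a given prefix $\pi$ implies (or contradicts) $p_i\mu(\mathbf{z})\in\eff$.'' Finally, to see that the fifteen surviving sets are genuinely distinct and nonempty, I would exhibit for each a witness word (a suitable shift of $\mathbf{t}$ or $\tbar$) and, for any two of them, a short prefix witnessing that some $\mathbf{x}$ lies in one but not the other.
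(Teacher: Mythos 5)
Your overall architecture matches the paper's: reduce by the complementation symmetry to the upper half of Fig.~\ref{tree}, kill the empty sets by exhibiting short forbidden factors ($000$, $01010$, their images), and handle each equality $F_{w_1}=F_{w_2}$ by noting that (usually) a suffix of one word is a $\mu$-power image of the other, so only prefix powers need to be controlled, via Lemma~\ref{cor} and an integrality gap below $\frac{7}{3}$. However, there is a genuine gap in the one general mechanism you offer in place of the case analysis. Your ``descent'' is wrong as stated: Lemma~\ref{cor} preserves the exponent $\beta\ge 2$ while halving the period, and exponents in $[2,\frac{7}{3})$ are perfectly legal at every period in a \gfg word (e.g.\ $00$ is a square of period $1$), so iterating the descent never produces ``a short forbidden power (a cube, say)'' and never yields a contradiction by itself. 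The argument that actually works runs in the opposite direction: take the \emph{longest} $q$-periodic prefix of the inner word, of exponent $p/q<\frac{7}{3}$, so that integrality gives $3p\le 7q-1$; its image under $\mu^k$ is the longest $2^kq$-periodic prefix (Lemma~\ref{cor} in contrapositive form), and prepending $j\le 2$ letters raises the exponent to at most $(2^kp+j)/(2^kq)$, which is still below $\frac{7}{3}$ once $2^k>3j$. This is the computation the paper performs for $F_{10}=F_\epsilon$ and in its reusable step $(*)$; without it your ``bounded prefix'' reduction does not go through, because membership in $F_w$ is a condition on periodic prefixes of \emph{all} periods, not on a prefix of bounded length.

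Separately, your framing of every entry of Fig.~\ref{tree} as ``a single prepend-$p_i$-then-$\mu$ step'' does not cover the equalities that relate sets at different positions in the tree, such as $F_{110}=F_3$, $F_{1300}=F_{130}$, and $F_{2030}=F_{310}$, where neither of $C_{w_1}(\mathbf{x})$, $C_{w_2}(\mathbf{x})$ contains a $\mu$-power image of the other (or only one direction is covered by the suffix-image observation). These require separate two-sided arguments; for instance, the inclusion $F_{110}\subseteq F_3$ needs the paper's analysis showing that a forbidden prefix power of $1\mu(\mathbf{x})$ would have $3p=7q$ with $q$ divisible by $6$, and would then extend to a $\frac{7}{3}$-power inside $\mu(\mathbf{x})$ via Lemma~\ref{squares} and the parity argument of Lemma~\ref{cor} --- a contradiction. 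Your proposal defers all sixteen of these verifications to ``a handful of statements'' without carrying any of them out, and since each one needs its own exponent bookkeeping (and a few need genuinely different ideas), this is where the substance of the lemma lies.
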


\begin{proof}
Due to symmetry, it is enough to prove only the 30 equalities from the
upper half of Fig.~\ref{tree} and the equality $F_0=F_{\epsilon}$. We
first prove the emptiness of 15 sets from the upper half of
Fig.~\ref{tree}.

\smallskip
Four sets: $F_{21}$, $F_{22}$, $F_{201}$, and $F_{202}$, consist of
words that start $000$.

Eight sets consist of words that contain the factor $0\mu(11)=01010$
($F_{14}$, $F_{24}$, $F_{133}$, $F_{134}$, $F_{1303}$, and $F_{1304}$),
its $\mu$-image ($F_{114}$), or the complement of its $\mu$-image
($F_{132}$).

Two sets: $F_{2031}$ and $F_{2032}$, consist of words that start
$00\mu^2(1)0=0010010$. Finally, the words from the set $F_{200}$ have
the form $00\mu^3({\bf x})$; each of these words starts either $000$ or
$0010010$.

\medskip
Each of the 16 remaining equalities has the form $F_{w_1}=F_{w_2}$. We
prove them by showing that for an arbitrary ${\bf x}\in \eff$, the
words $u_1=C_{w_1}({\bf x})$ and $u_2=C_{w_2}({\bf x})$ are either both
\gfg or both not. In most cases, some suffix of $u_1$ coincides with
the image of $u_2$ under some power of $\mu$. Then by
Lemma~\ref{exps}\,(2) the word $u_1$ can be \gfg only if $u_2$ is \gfe.
In these cases, it suffices to study $u_1$ assuming that $u_2$ is
\gfe.

When we refer to a ``forbidden'' power in what follows, we mean 
a power of exponent $\geq \frac{7}{3}$.

\smallskip

$F_0=F_{\epsilon}$: By Lemma~\ref{exps}\,(2), $u_1=\mu({\bf x})$ is
\gfg iff $u_2={\bf x}$ is \gfe.

\smallskip

$F_{10}=F_{\epsilon}$: The word $u_1=0\mu(\mu({\bf x}))$ contains a
$\mu^2$-image of $u_2={\bf x}$. If ${\bf x}$ is \gfe, then so is
$\mu^2({\bf x})$. Hence, if $u_1$ has a forbidden power, then this
power must be a prefix of $u_1$.

Now let $\beta<7/3$ be the largest possible exponent of a prefix of
${\bf x}$ and $q$ be the smallest period of a prefix of exponent
$\beta$ in ${\bf x}$. Write $\beta=p/q$. Then the word $\mu^2({\bf x})$
has a prefix of exponent $\beta$ and of period $4q$ by
Lemma~\ref{exps}\,(1), but no prefixes of a bigger exponent or of the
same exponent and a smaller period by Lemma~\ref{cor}. Hence $u_1$ has
no prefixes of exponent greater than $(4p+1)/(4q)$. Since $p$ and $q$
are integers, we obtain the required inequality:  $$
\frac{p}{q}<\frac{7}{3}\Longrightarrow3p<7q\Longrightarrow
3p+\frac{3}{4}<7q\Longrightarrow \frac{4p+1}{4q}<\frac{7}{3}.  $$

\smallskip

$F_{12}=F_2$: The word $u_1=0\mu(00\mu({\bf x}))$ contains a
$\mu$-image of $u_2=00\mu({\bf x})$. Suppose that $u_2$ is \gfe. Then
it starts $0010011$. Since the factor $001001$ cannot occur in a
$\mu$-image, we note that

\begin{itemize}
\item[$(\star)$] the word $00\mu({\bf x})$ has only two prefixes of exponent 2 ($00$ and $001001$) and no prefixes of bigger exponents.
\end{itemize}

By Lemma~\ref{cor}, the word $\mu(u_2)$ has only two prefixes of
exponent 2 ($\mu(00)$ and $\mu(001001)$) and no prefixes of bigger
exponents. Thus, the word $u_1=0\mu(u_2)$ is obviously \gfe.

\smallskip

$F_{23}=F_{13}$: We have $u_1=00\mu(1\mu({\bf x}))=0u_2$. Suppose the
word $u_2$ is \gfe; then it starts $010011$. A forbidden power in
$u_1$, if any, occurs at the beginning and hence contains $0010011$.
But $00100$ does not occur later in this word, so no such forbidden
power exists.

\smallskip

$F_{110}=F_{3}$: The word $u_1=0\mu(0\mu(\mu({\bf x})))=001\mu^3(x)$ is
a suffix of the $\mu^2$-image of $u_2=1\mu({\bf x})$. Hence, if $u_2$
is \gfe, then by Lemma~\ref{exps}\,(2) $u_1$ is \gfg as well.

For the other direction, assume $u_1$ is \gfg and then $\mu({\bf x})$
is \gfe. So, if $u_2$ contains some power $y y y'$ with $|y'| \geq
|y|/3$, then this power must be a prefix of $u_2$. Put $y = 1z$ and
$y'=1z'$. The word $\mu({\bf x})$ starts $z1z1z'$. Hence ${\bf x}$
starts $\mu^{-1}(z1)\mu^{-1}(z1)$ by Lemma~\ref{squares}. So we
conclude that $|z1|=|y|$ is an even number. Now let $|y| = q$ and $p =
|yyy'|$ so that $p/q \geq 7/3$.  Thus the word $1u_1=\mu^2(u_2)$ starts
with a $(p/q)$-power of period $4q$. Since $u_1$ is \gfe, we have
$(4p{-}1)/4q < 7/3$.

This gives us the inequalities  $3p \geq 7q$ and $3p - 3/4 < 7q$.
Since $p$ and $q$ are integers this means $3p = 7q$ and hence $q$ is
divisible by 3.  On the other hand from above $q$ is even.  So $q$ is
divisible by 6. Now $|y'| = |y|/3$ so $|y'|$ is even. But then $z'$ is
odd, and begins at an even position in a $\mu$-image, so the character
following $z'$ is fixed and must be the same character as in the
corresponding position of $z$, say $a$.  Thus $z1z1z'a$ is a
$(7/3)$-power occurring in $\mu({\bf x})$, a contradiction.

\smallskip

$F_{111}=F_{11}$: The word $u_1=0\mu(0\mu(0\mu({\bf x})))=0010110 \mu^3
({\bf x})$ contains a $\mu$-image of $u_2=0\mu(0\mu({\bf
  x}))=001 \mu^2({\bf x})$. Suppose $u_2$ is \gfg but to the contrary
$u_1=0\mu(u_2)$ has a forbidden power. By Lemma~\ref{exps}\,(2), this
power must be a prefix of $u_1$. Note that this power can be extended
to the left by 1 (not by 0, because a $\mu$-image cannot contain
$000$). Hence the word $1u_1=\mu^3(1{\bf x})$ starts with a forbidden
power. This induces a forbidden power at the beginning of $1{\bf x}$;
this power has a period $q$ and some exponent $p/q\ge 7/3$. Then $u_1$
has a prefix of exponent $(8p-1)/8q\ge 7/3$. On the other hand the word
$\mu(u_2)$ is \gfe, whence $(8p-2)/q< 7/3$. So we get the system of
inequalities $3p - 3/8 \geq 7q$, $3p - 3/4 < 7q$. This system has no
integer solutions, a contradiction.

\smallskip

$F_{112}=F_{2}$: We have $u_1=0\mu(0\mu(00\mu({\bf
x})))=001\mu^2(00\mu({\bf x}))=001\mu^2(u_2)$. In view of $(\star)$,
one can easily check that if $u_2$ is \gfe, then so is $u_1$.

\smallskip

$F_{113}=F_{13}$: We have $u_1=0\mu(0\mu(1\mu({\bf
  x})))=0\mu(u_2)$. Suppose $u_2$ is \gfe. Then $\mu^2({\bf x})$ starts
$01101001$. Assume to the contrary that $u_1$ has a forbidden power. By
Lemma~\ref{exps}\,(2), this power must be a prefix of $u_1$. Again,
this power can be extended to the left by 1, not by 0. Hence the word
$1u_1=\mu^2(11\mu({\bf x}))$ starts with a forbidden power, thus
inducing a forbidden power at the beginning of $u=11\mu({\bf
x})=110110\cdots$. The word $u$ has only two squares as prefixes ($11$
and $110110$, cf. $(\star)$). Hence $u$ has the prefix $11011010$ and
no forbidden factors except for the $(7/3)$-power prefix $1101101$.
Therefore, the word $u_1$ has no prefixes of exponent $\ge 7/3$.

\smallskip

$F_{131}=F_{31}$: We have $u_1=0\mu(1\mu(0\mu({\bf x})))=0\mu(u_2)$.
Suppose $u_2$ is \gfe. Then the word $u=11\mu(0\mu({\bf x}))$ is \gfg
by the equality $F_{41}=F_{31}$, which is symmetric to $F_{23}=F_{13}$
proved above. But $u_1$ is a suffix of $\mu(u)$, whence the result.

\smallskip

$F_{204}=F_{4}$: We have $u_1=00\mu(\mu(11\mu({\bf x})))=00\mu^2(u_2)$.
Suppose $u_2$ is \gfe. Using the observation symmetric to $(\star)$, we
check by inspection that $u_1$ contains no forbidden power.

\smallskip

$F_{1300}=F_{130}$: Neither one of the words
$u_1=0\mu(1\mu(\mu(\mu({\bf x}))))=010\mu^4({\bf x})$,
$u_2=0\mu(1\mu(\mu({\bf x})))=010\mu^3({\bf x})$ contains an image of
the other. The proofs for both directions are essentially the same, so
we give only one of them. Let $u_1$ be \gfe; then the words $\mu^4({\bf
x})$, ${\bf x}$, and $\mu^3({\bf x})$ are \gfg as well, and $x$ starts
0. A simple inspection of short prefixes of $u_2$ shows that if this
word is not \gfe, then some $\beta$-power with $\beta\ge2$ is a prefix
of $\mu^3({\bf x})$. By Lemma~\ref{cor}, the word ${\bf x}$ also starts
with a $\beta$-power. The argument below will be repeated, with small
variations, for several identities.

\begin{itemize}
\item[$(*)$] Consider a prefix $yyy'$ of ${\bf x}$ which is the longest
prefix of ${\bf x}$ with period $|y|$. Then $|y'|<|y|/3$. By
Lemma~\ref{cor}, the longest prefix of the word $\mu^3({\bf x})$ having
period $8|y|$ is $\mu^3(yyy')$. If some word $z\mu^3(yyy')$ also has
period $8|y|$, then $z$ should be a suffix of a $\mu^3$-image of some
word. Since the word $010$ is not such a suffix, then $10\mu^3(yyy')$
is the longest possible $(8|y|)$-periodic word contained in $u_2$. Let
us estimate its exponent. Since $|z|$ and $|y'|$ are integers, we have
$$
8|y'|<8|y|/3\Longrightarrow 24|y'|<8|y|\Longrightarrow
24|y'|+6<8|y|\Longrightarrow 8|y'|+2<8|y|/3,
$$ 
whence $\exp(10\mu^3(yyy'))<7/3$. Since we have chosen an arbitrary
prefix $yyy'$ of ${\bf x}$, we conclude that the word $u_2$ is \gfe.
\end{itemize}

\smallskip

$F_{1301}=F_{1}$: The word $u_1=0\mu(1\mu(\mu(0\mu({\bf
x}))))=010\mu^3(0\mu({\bf x}))$ contains a $\mu^3$-image of
$u_2=0\mu({\bf x})$. Suppose $u_2$ is \gfe. It suffices to check that
the prefix $010$ of $u_1$ does not complete any prefix of $\mu^3(u_2)$
to a forbidden power. For short prefixes, this can be checked directly,
while long prefixes that can be completed in this way should have
exponents $\ge 2$. By Lemma~\ref{cor}, a prefix of period $p$ and
exponent $\beta\ge 2$ of the word $\mu^3(u_2)$ corresponds to the
prefix of the word $u_2$ having the exponent $\beta$ and the period
$p/8$. So, we repeat the argument $(*)$ replacing ${\bf x}$ with $u_2$
to obtain that $u_1$ is \gfe.

\smallskip

$F_{1302}=F_{2}$: We have $u_1=0\mu(1\mu(\mu(00\mu({\bf
x}))))=010\mu^3(00\mu({\bf x}))=010\mu^3(u_2)$. Suppose $u_2$ is \gfe.
By $(\star)$ and Lemma~\ref{cor}, among the prefixes of $\mu^3(u_2)$
there are only two squares, $\mu^3(00)$ and $\mu^3(001001)$, and no
words of bigger exponent. By direct inspection, $u_1$ is $(7/3)$-free.

\smallskip

$F_{2030}=F_{310}$: Neither one of the words
$u_1=00\mu(\mu(1\mu(\mu({\bf x}))))=001001\mu^4({\bf x})$ and
$$u_2=1\mu(0\mu(\mu({\bf x})))=101\mu^3({\bf x})$$ contains an image of
the other. If the word $u_1$ is assumed to be \gfe, then the proof
repeats the proof of the identity $F_{1300}=F_{130}$, up to renaming
all $0$'s to $1$'s and vice versa. Let $u_2$ be \gfe. The words
$\mu^4({\bf x})$ and ${\bf x}$ are also \gfe, and ${\bf x}$ begins with
$1$, assuring that there are no short forbidden powers in the beginning
of $u_1$.  Concerning long forbidden powers, we consider, similar to
$(*)$, a prefix $yyy'$ of ${\bf x}$ which is the longest prefix of
${\bf x}$ with period $|y|$. The longest possible $(16|y|)$-periodic
word contained in $u_1$ is $01001\mu^4(yyy')$, because $001001$ is not
a suffix of a $\mu^4$-image. As in $(*)$, we obtain $16|y'|+5<16|y|/3$,
implying $\exp(01001\mu^4(yyy'))<7/3$. Hence the word $u_1$ is \gfe.

\smallskip

$F_{2033}=F_{33}$: The word
$u_1=00\mu(\mu(1\mu(1\mu(x))))=00\mu^2(110\mu^2(x))$ contains a
$\mu^2$-image of $u_2=1\mu(1\mu(x))=110\mu^2(x)$.  Again, if the word
$u_2$ is \gfe, then so is $\mu^2(u_2)$, and it suffices to check that
the prefix $00$ of $u_1$ does not complete any prefix of $\mu^2(u_2)$
to a forbidden power. Similar to $(*)$, consider a prefix $yyy'$ of
$u_2$ which is the longest prefix of $u_2$ with period $|y|$. The
longest possible $(4|y|)$-periodic word contained in $u_1$ is
$0\mu^2(yyy')$, because $00$ is not a suffix of a $\mu^2$-image. As in
$(*)$, we see that $4|y'|+1<4|y|/3$, implying $\exp(0\mu^2(yyy'))<7/3$,
and conclude that the word $u_1$ is \gfe.

\smallskip

$F_{2034}=F_{4}$: The word
$u_1=00\mu(\mu(1\mu(11\mu(x))))=001001\mu^3(11\mu(x))$ contains a
$\mu^3$-image of $u_2=11\mu(x)$. Suppose $u_2$ is \gfe. Using $(\star)$
and Lemma~\ref{cor}, we conclude that among the prefixes of
$\mu^3(u_2)$ there are only two squares, $\mu^3(11)$ and
$\mu^3(110110)$, and no words of bigger exponent. By direct inspection,
$u_1$ is $(7/3)$-free.
\end{proof}

\begin{figure}[!htb]
\centerline{
\begin{picture}(110,92)(-5,-1)
\gasset{Nw=7,Nh=7,AHangle=15,AHLength=2.5,loopCW=n,loopdiam=7,ELdist=0.5,linewidth=0.1}
\node(e)(50,45){$\epsilon$}
\node(3)(32,55){\small$3$}
\node(1)(32,35){\small$1$}
\node(33)(16,65){\small$33$}
\node(11)(16,25){\small$11$}
\node(31)(0,80){\small$31$}
\node(13)(0,10){\small$13$}
\node(310)(50,75){\small$310$}
\node(130)(50,15){\small$130$}
\node(4)(68,55){\small$4$}
\node(2)(68,35){\small$2$}
\node(40)(85,53){\small$40$}
\node(20)(85,37){\small$20$}
\node(203)(100,65){\small$203$}
\node(401)(100,25){\small$401$}
\drawloop[loopangle=0](e){\scriptsize$0$}
\drawloop[loopangle=225](33){\scriptsize$3$}
\drawloop[loopangle=135](11){\scriptsize$1$}
\drawloop[loopangle=90](310){\scriptsize$0$}
\drawloop[loopangle=270](130){\scriptsize$0$}
\drawedge[curvedepth=2,ELside=r](e,1){\scriptsize$1$}
\drawedge[curvedepth=-2,ELside=r](e,3){\scriptsize$3$}
\drawedge[curvedepth=2](1,e){\scriptsize$0$}
\drawedge[curvedepth=-2](3,e){\scriptsize$0$}
\drawedge[curvedepth=-2](e,2){\scriptsize$2$}
\drawedge[curvedepth=2](e,4){\scriptsize$4$}
\drawedge[curvedepth=-2](1,2){\scriptsize$2$}
\drawedge[curvedepth=2](3,4){\scriptsize$4$}
\drawedge[ELside=r,ELpos=40](1,11){\scriptsize$1$}
\drawedge[curvedepth=4,ELpos=37](1,13){\scriptsize$3$}
\drawedge(11,3){\scriptsize$0$}
\drawedge[ELside=r](11,13){\scriptsize$3$}
\drawedge[ELpos=40](3,33){\scriptsize$3$}
\drawedge[curvedepth=-4,ELside=r,ELpos=40](3,31){\scriptsize$1$}
\drawedge[ELside=r](33,1){\scriptsize$0$}
\drawedge[ELside=r](33,31){\scriptsize$1$}
\drawedge[curvedepth=4,ELside=r](31,13){\scriptsize$3$}
\drawedge[curvedepth=4,ELside=r](13,31){\scriptsize$1$}
\drawedge[curvedepth=6](31,310){\scriptsize$0$}
\drawedge[ELside=r,ELpos=45](310,3){\scriptsize$3$}
\drawedge[ELpos=45](310,4){\scriptsize$4$}
\drawedge[curvedepth=30](33,4){\scriptsize$4$}
\drawedge[curvedepth=-6](13,130){\scriptsize$0$}
\drawedge[ELside=r,ELpos=40](130,1){\scriptsize$1$}
\drawedge[ELpos=40](130,2){\scriptsize$2$}
\drawedge[curvedepth=-30](11,2){\scriptsize$2$}
\drawedge[curvedepth=-6](4,31){\scriptsize$1$}
\drawedge[curvedepth=6,ELside=r](2,13){\scriptsize$3$}
\drawedge[curvedepth=3,ELside=r,ELpos=30](203,33){\scriptsize$3$}
\drawedge[curvedepth=-3,ELside=r,ELpos=30](401,11){\scriptsize$1$}
\drawedge(4,40){\scriptsize$0$}
\drawedge[ELside=r,ELpos=30](40,2){\scriptsize$2$}
\drawedge(40,401){\scriptsize$1$}
\drawedge[ELside=r,ELpos=40](401,2){\scriptsize$2$}
\drawedge[ELside=r](401,130){\scriptsize$0$}
\drawedge(2,20){\scriptsize$0$}
\drawedge[ELside=r,ELpos=30](20,4){\scriptsize$4$}
\drawedge(20,203){\scriptsize$3$}
\drawedge[ELpos=40](203,4){\scriptsize$4$}
\drawedge[ELside=r](203,310){\scriptsize$0$}
\end{picture} }
\caption{Automaton coding infinite binary \gfg words} \label{figure1}
\end{figure}

From Lemma~\ref{main2} and the results above, we get

\begin{theorem}
    Every infinite binary \gfg word $\bf x$ is encoded by an
infinite path, starting in $F_{\epsilon}$,
through the automaton in Figure~\ref{figure1}.

   Every infinite path through the automaton
not ending in $0^\omega$ codes a unique
infinite binary \gfg word $\bf x$.  If a path $\bf i$ ends in
$0^\omega$ and this suffix corresponds to a cycle on state $F_{\epsilon}$
then $\bf x$ is 
coded by either ${\bf i}; 1$ or
${\bf i}; 3$.  If a path $\bf i$ ends in $0^\omega$
and this suffix corresponds to a cycle on $F_{310}$,
then $\bf x$ is coded by ${\bf i}; 3$.  If a path $\bf i$ ends
in $0^\omega$ and this suffix corresponds to a cycle 
on $F_{130}$, then $\bf x$ is coded by ${\bf i}; 1$.
\label{main}
\end{theorem}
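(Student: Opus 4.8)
The plan is to show that the index sequences arising from $\eff$-words are \emph{exactly} the infinite paths of the automaton, proving each direction of the correspondence and then pinning down the tail conventions. The first thing I would record is \emph{why} Figure~\ref{figure1} is a genuine deterministic automaton, i.e.\ why the successor of a state depends only on the state and not on the word $w$ that reached it. This follows from the identity $C_{wi}({\bf x})=C_w(C_i({\bf x}))$, which gives $F_{wi}=C_i^{-1}(F_w)$; hence the \emph{set} $F_w$ alone determines every successor $F_{wi}$. Combined with Lemma~\ref{main2}, the $15$ nonempty sets together with the empty ``dead'' set thus carry a well-defined transition $F_w\xrightarrow{i}F_{wi}$, these transitions are precisely the ones displayed in Figure~\ref{tree}, and an index $i$ with $F_{wi}=\varnothing$ is a missing edge.

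For the direction \emph{word $\to$ path}, let ${\bf x}\in\eff$ and let ${\bf i}=(i_j)_{j\ge1}$ be its expansion~(\ref{qq}) from the Corollary. Applying the factorization theorem $r$ times writes ${\bf x}=C_{w_r}({\bf x}^{(r)})$ with $w_r=i_1\cdots i_r$ and ${\bf x}^{(r)}\in\eff$, so ${\bf x}^{(r)}\in F_{w_r}$ and therefore $F_{w_r}\ne\varnothing$ for every $r$. Since $w_{r+1}=w_r\,i_{r+1}$ with both targets nonempty, each edge $F_{w_r}\xrightarrow{i_{r+1}}F_{w_{r+1}}$ exists; thus ${\bf i}$ never enters the dead state and is a legal infinite path from $F_\epsilon$.

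The substance is the direction \emph{path $\to$ word}. Given a legal path ${\bf i}$, put $v_r=p_{i_1}\mu(p_{i_2})\mu^2(p_{i_3})\cdots\mu^{r-1}(p_{i_r})$ and note the clean identity $C_{w_r}({\bf z})=v_r\,\mu^r({\bf z})$, valid for \emph{every} ${\bf z}$; in particular $v_{r+1}=v_r\,\mu^r(p_{i_{r+1}})$, so the $v_r$ form a nested chain of finite words. Choosing any ${\bf z}_r\in F_{w_r}$ (possible since the path is legal) exhibits $v_r$ as a prefix of the \gfg word $C_{w_r}({\bf z}_r)$, so each $v_r$ is itself \gfe. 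If ${\bf i}$ does not end in $0^\omega$, then infinitely many $p_{i_j}$ are nonempty, so $|v_r|\to\infty$ and ${\bf x}:=\lim_r v_r$ is a well-defined infinite word; every finite prefix of ${\bf x}$ lies inside some $v_r$ and is therefore \gfe, whence ${\bf x}\in\eff$. Peeling off $p_{i_1},p_{i_2},\dots$ by inspecting five letters at a time (factorization theorem) shows the expansion of ${\bf x}$ is ${\bf i}$, and the uniqueness in the Corollary shows ${\bf x}$ is the only such word.

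There remains the $0^\omega$ tail, which I expect to be the main obstacle and the only place needing explicit computation. When ${\bf i}$ ends in $0^\omega$ the chain $v_r$ stabilizes, so the word must be $C_w({\bf y})$ for the finite prefix $w$ preceding the tail and some ${\bf y}$ whose own expansion is $0^\omega$, i.e.\ ${\bf y}\in\{{\bf t},\tbar\}$; the admissible tail(s) are exactly those with ${\bf y}$ in the set attached to the state carrying the $0$-loop. Inspecting the $0$-edges shows that the only $0$-self-loops sit at $F_\epsilon$, $F_{310}$, and $F_{130}$, so these three cases are exhaustive. Since $F_\epsilon=\eff$ contains both ${\bf t}$ and $\tbar$, it yields the two codings ${\bf i};1$ and ${\bf i};3$. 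For $F_{310}$ one computes $C_{310}({\bf y})=101\,\mu^3({\bf y})$; using $\mu^3({\bf t})={\bf t}$ and $\mu^3(\tbar)=\tbar$, the choice ${\bf y}={\bf t}$ produces the prefix $10101$ (a $\tfrac52$-power, hence forbidden) while ${\bf y}=\tbar$ is \gfe, leaving only ${\bf i};3$. The state $F_{130}$ is the $0\leftrightarrow1$ image, with $C_{130}({\bf y})=010\,\mu^3({\bf y})$, where $\tbar$ gives the forbidden prefix $01010$ and ${\bf t}$ survives, leaving only ${\bf i};1$. Combining these finite checks with the two generic directions yields the stated correspondence between $\eff$ and the paths of the automaton.
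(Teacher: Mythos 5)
Your proof is correct and follows the route the paper intends: the paper derives this theorem in a single line from Lemma~\ref{main2} and the factorization corollary, and your write-up supplies exactly the details that derivation leaves implicit --- the observation $F_{wi}=C_i^{-1}(F_w)$ making the automaton well-defined, the limit construction $\lim_r v_r$ for the path-to-word direction, and the checks that $101\,{\bf t}$ and $010\,\tbar$ begin with $\tfrac52$-powers, which rule out the wrong tails at $F_{310}$ and $F_{130}$. I see no gaps; the only place you wave slightly is in asserting $\tbar\in F_{310}$ and ${\bf t}\in F_{130}$, but that is settled by the same $(*)$-style exponent estimate used repeatedly in the proof of Lemma~\ref{main2}, and the paper itself asserts these facts without further comment.
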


\begin{remark}
Blondel, Cassaigne, and Jungers \cite{Blondel&Cassaigne&Jungers:2009} 
obtained a similar result, and even more general ones,
for finite words.  The main 
advantage to our construction is its simplicity.
\end{remark}

\begin{corollary}
Each of the 15 sets $F_{\epsilon}$,
$F_1$, $F_2$, $F_3$, $F_4$,
$F_{11}$, $F_{33}$, $F_{13}$, $F_{31}$, $F_{20}$, $F_{40}$,
$F_{130}$, $F_{310}$, $F_{203}$, $F_{401}$ is uncountable.
\end{corollary}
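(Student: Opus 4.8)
The plan is to reduce the statement to a counting problem about infinite paths in the automaton of Figure~\ref{figure1}. First I would record the correspondence underlying Theorem~\ref{main}: from the definitions of $C_w$ and $F_w$, if $w=i_1\cdots i_r$ and $\mathbf{x}$ has expansion $(j_k)_{k\ge 1}$, then $C_w(\mathbf{x})$ has expansion $i_1\cdots i_r j_1 j_2\cdots$; hence $\mathbf{x}\in F_w$ if and only if the sequence $(j_k)$ labels an infinite path in Figure~\ref{figure1} starting at the state $F_w$ (the state reached from $F_{\epsilon}$ by reading $w$). Moreover, by the uniqueness of the expansion \eqref{qq}, two infinite paths that do not end in $0^\omega$ and that differ must code two distinct \gfg words. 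Thus it suffices to show that for each of the $15$ states the set of infinite paths leaving it which do not end in $0^\omega$ is uncountable.

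Next I would reduce everything to one easily verified property of the graph: every state can reach $F_{\epsilon}$, and $F_{\epsilon}$ can reach every state. Both are finite checks on Figure~\ref{figure1}; for instance $F_{\epsilon}\to F_1\to F_{11}$, $F_{\epsilon}\to F_3\to F_{31}\to F_{310}$, $F_{\epsilon}\to F_2\to F_{20}\to F_{203}$, $F_{\epsilon}\to F_4\to F_{40}\to F_{401}$ (and similarly for $F_{13},F_{33},F_{130}$), together with return paths such as $F_{203}\to F_{33}\to F_1\to F_{\epsilon}$ and $F_{310}\to F_3\to F_{\epsilon}$. Consequently the automaton is strongly connected, so from any fixed state one may first travel to $F_{\epsilon}$ along a fixed finite prefix and then continue freely.

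It then remains to exhibit uncountably many infinite continuations from $F_{\epsilon}$ that avoid a $0^\omega$ tail. At $F_{\epsilon}$ there are two distinct cycles of equal length: the loop $F_{\epsilon}\xrightarrow{0}F_{\epsilon}$ traversed twice (label block $B_0=00$) and the cycle $F_{\epsilon}\xrightarrow{1}F_1\xrightarrow{0}F_{\epsilon}$ (label block $B_1=10$). For $s\in\{0,1\}^\omega$ the concatenation $B_{s_1}B_{s_2}\cdots$ is a valid infinite path; since the blocks have equal length and differ, distinct $s$ give distinct label sequences, yielding an injection of $\{0,1\}^\omega$ into the paths from $F_{\epsilon}$, and the label sequence ends in $0^\omega$ exactly when $s$ does. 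Discarding the countably many such $s$ still leaves uncountably many paths, none ending in $0^\omega$, hence uncountably many distinct \gfg words in $F_{\epsilon}$; prepending the fixed prefix from an arbitrary state gives the same conclusion for every $F_w$. The one point requiring care, and the one I would treat as the main obstacle, is the bookkeeping around the $0^\omega$ tails: one must confirm that the path-to-word map is injective precisely on the non-$0^\omega$-ending paths (which \eqref{qq} guarantees) and that the exceptional paths form only a countable set, so that their removal does not affect uncountability. Everything else is a routine finite inspection of Figure~\ref{figure1}.
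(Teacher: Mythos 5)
Your argument is correct, and its engine is the same as the paper's: exhibit two distinct cycles and take all $\omega$-fold interleavings to get uncountably many infinite paths, then invoke the uniqueness of the expansion~(\ref{qq}) to turn distinct paths into distinct words. Where you differ is in how the work is distributed over the 15 states. The paper gives, for each state (up to the $0\leftrightarrow 1$ symmetry, so eight states), an explicit two-element prefix code of return cycles --- e.g.\ $(0+10)^\omega$ at $F_\epsilon$, $(01+001)^\omega$ at $F_1$, $(0402+030402)^\omega$ at $F_2$, and so on --- and takes uncountably many distinct closed paths at each state. You instead verify once that the automaton is strongly connected, manufacture the uncountable family only at $F_\epsilon$ (using the equal-length blocks $00$ and $10$, which makes injectivity immediate), and transport it to every other state by prepending a fixed connecting path. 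Your reduction requires slightly less explicit data and is arguably cleaner; the paper's version avoids any appeal to connectivity and is entirely self-contained at each state. You are also more explicit than the paper about the two points it leaves tacit: that the path-to-word map is injective on paths not ending in $0^\omega$, and that the exceptional $0^\omega$-tailed paths form a countable set whose removal does not affect uncountability. Both proofs are sound; yours is a legitimate, marginally more economical variant.
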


\begin{proof}
It suffices to provide uncountably
many distinct paths from each state
to itself. By symmetry,
it suffices to prove this for all the states labeled $\epsilon$ or
below in Figure~\ref{figure1}.
These are as follows:
\begin{itemize}
\item $\epsilon$: $(0+10)^\omega$
\item $1$: $(01+001)^\omega$
\item $2$: $(0402+030402)^\omega$
\item $11$: $(0011+00011)^\omega$
\item $13$: $(013+0013)^\omega$
\item $20$: $(4020+34020)^\omega$
\item $401$: $(10401+203401)^\omega$
\item $130$: $(0+104010)^\omega$.   
\end{itemize}
\end{proof}

\begin{corollary}
For all words $w \in \lbrace 0,1,2,3,4 \rbrace^*$, either
$F_w$ is empty or uncountable.
\end{corollary}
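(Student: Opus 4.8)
The plan is to reduce everything to the finite bookkeeping already carried out in Lemma~\ref{main2} and in the preceding corollary. Concretely, I will show that for \emph{every} $w\in\{0,1,2,3,4\}^*$ the set $F_w$ is either empty or equal to one of the $15$ nonempty sets listed in Lemma~\ref{main2}. Since each of those $15$ sets is uncountable by the preceding corollary, the claim follows immediately.

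First I would record the recursive identity underlying the whole construction. Directly from the definition of $C_w$, appending a letter $a$ at the end of $w$ amounts to substituting $C_a$ at the innermost position, so $C_{wa}({\bf x})=C_w(C_a({\bf x}))$ for all $w$ and all letters $a\in\{0,1,2,3,4\}$, where $C_a({\bf x})=p_a\mu({\bf x})$. Hence ${\bf x}\in F_{wa}$ iff $C_a({\bf x})\in F_w$, that is, $F_{wa}=C_a^{-1}(F_w)$. The essential point is that the right-hand side depends only on the \emph{set} $F_w$ and on the letter $a$, not on the particular word $w$ naming that set. Thus appending a letter induces a well-defined transition on the collection of distinct sets $\{F_w\}$, which is precisely the transition structure of the automaton in Figure~\ref{figure1}.

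Next I would invoke Lemma~\ref{main2}: for each of the $15$ canonical representatives $F_v$ and each letter $a$, the value $F_{va}=C_a^{-1}(F_v)$ is computed there to be either one of the $15$ listed sets or empty. Together with the trivial fact $C_a^{-1}(\varnothing)=\varnothing$, this says that the family consisting of the $15$ nonempty sets together with $\varnothing$ is closed under all five transition maps. A one-line induction on $|w|$ then finishes: $F_\epsilon=\eff$ is among the $15$, and if $F_w$ is one of the $15$ or empty, then $F_{wa}=C_a^{-1}(F_w)$ is again one of the $15$ or empty. Hence every $F_w$ lies in this finite family, so it is either empty or (by the preceding corollary) uncountable.

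I expect the only genuinely nontrivial step to be the reasoning of the third paragraph: the equalities of Lemma~\ref{main2} are literally stated only for short words $w$, and one must argue that they nonetheless determine $F_w$ for arbitrarily long $w$. The preimage formula $F_{wa}=C_a^{-1}(F_w)$ is what resolves this, since it shows that the transition depends only on the current set; this is exactly what upgrades the finite table of Fig.~\ref{tree} into a bona fide $15$-state automaton and lets the induction run. Everything else is immediate once this closure is in hand.
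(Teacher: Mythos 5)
Your proof is correct and is exactly the argument the paper intends: the corollary is stated there without proof, as an immediate consequence of Lemma~\ref{main2} (only $15$ distinct nonempty sets $F_w$) and the preceding corollary (each of them uncountable). Your explicit observation that $F_{wa}=C_a^{-1}(F_w)$ depends only on the set $F_w$, so that the finite table of Fig.~\ref{tree} closes up under appending letters and the induction on $|w|$ goes through, is precisely the detail the paper leaves implicit, and you have supplied it correctly.
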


\section{The lexicographically least $\frac{7}{3}$-power-free word}

\begin{theorem}
The lexicographically least infinite binary \gfg word  is
$0 0 1 0 0 1 \overline{\bf t}$.
\end{theorem}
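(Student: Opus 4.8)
The key tool is that $\mu$ is strictly increasing for the lexicographic order: if $\mathbf a<\mathbf b$, their first disagreement is a position carrying $0$ in $\mathbf a$ and $1$ in $\mathbf b$, and $\mu$ sends it to a position carrying $0$ in $\mu(\mathbf a)$ and $1$ in $\mu(\mathbf b)$, so $\mu(\mathbf a)<\mu(\mathbf b)$. Writing $g(S)$ for the lexicographically least element of a set $S\subseteq\Sigma^\omega$, this gives $g(\mu(S))=\mu(g(S))$. The plan is to build the lexicographically least $\mathbf w\in\eff$ by a greedy letter-by-letter construction, then use the factorization theorem to peel off a block $\mu(\cdot)$ at a time, turning the problem into a finite system over the $15$ states of Figure~\ref{figure1}.

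First I check that $\mathbf x:=001001\overline{\mathbf t}$ really lies in $\eff$. Since $\overline{\mathbf t}=\mu(\overline{\mathbf t})$, one has $001001=00\mu(10)$ and hence $\mathbf x=00\mu(10\overline{\mathbf t})=00\mu(\mu(1\overline{\mathbf t}))=p_2\mu(p_0\mu(p_3\mu(\overline{\mathbf t})))$; that is, $\mathbf x$ is coded by the index sequence $\mathbf i=2,0,3,0^\omega$. Tracing $\mathbf i$ in Figure~\ref{figure1} yields the path $F_\epsilon\to_2F_2\to_0F_{20}\to_3F_{203}\to_0F_{310}$ followed by the $0$-loop at $F_{310}$, whose tail convention in Theorem~\ref{main} is precisely $\overline{\mathbf t}$. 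Thus $\mathbf x\in\eff$ and is a legitimate candidate for the minimum.

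For the lower bound I construct $\mathbf w=g(\eff)$ and show $\mathbf w=\mathbf x$. A short greedy computation fixes the start: $\mathbf w$ begins $00$ (both $0$ and $00$ extend), its third letter must be $1$ (else $000$), then $00$ again, then $1$ (else $001000$ contains $000$), then $1$ (else $0010010$ is a $\frac{7}{3}$-power), so $\mathbf w=0010011\cdots$. By the factorization theorem the prefix $00100$ forces $p=00$, so $\mathbf w=00\mu(\mathbf v)$ with $\mathbf v=g(F_2)$; since $\mu$ is increasing and the prefix $00$ is common, minimizing $\mathbf w$ is equivalent to minimizing $\mathbf v$. Iterating this reduction over the states of Figure~\ref{figure1} gives the finite system $g(F_S)=\min_i\,p_i\,\mu\big(g(F_{\delta(S,i)})\big)$, where $\delta(S,i)$ is the target of the index-$i$ edge out of $S$ and where, at each of the three $0$-loops $F_\epsilon,F_{130},F_{310}$, the ``stay forever'' option contributes the tail word ($\mathbf t$, $\mathbf t$, $\overline{\mathbf t}$ respectively, by Theorem~\ref{main}). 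Solving the system by following at each state the edge producing the least word, the optimal path is exactly $2,0,3,0^\omega$, so $g(F_\epsilon)=\mathbf x$; en route one records $g(F_2)=10\overline{\mathbf t}$, $g(F_{20})=1\overline{\mathbf t}$ and $g(F_{203})=g(F_{310})=\overline{\mathbf t}$ as consistency checks.

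The main obstacle is the lexicographic comparison at the branching states: emitting $p_i$ and then applying $\mu$ interleaves the two contributions, so one cannot merely pick the edge with the smallest $p_i$. Each comparison is nevertheless decided within a few letters by exploiting that every $\mu$-image is a concatenation of the blocks $01$ and $10$. For instance, at $F_\epsilon$ the edge $i=2$ yields $\mathbf x=00100\cdots$, whereas $i=1$ yields $0\mu(g(F_1))$; as a $\mu$-image, $\mu(g(F_1))$ forces any prefix $0010$ to continue as $00101$, so $0\mu(g(F_1))$ cannot match the prefix $001001$ of $\mathbf x$ and is larger by the fifth letter, while the edges $i=0,3,4$ lose even sooner. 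The same block structure (together with the facts that every word of $F_2$ begins $10$, that $00$ cannot be followed by $0$, and that $01010$ and $0010010$ are forbidden) resolves the remaining branch points and guarantees the comparisons terminate, there being only $15$ states. The one point that is not a greedy choice is the tail: at the loop $F_{310}$ Theorem~\ref{main} forces the tail $\overline{\mathbf t}$, and this is exactly why the least word ends in $\overline{\mathbf t}$ rather than in $\mathbf t$.
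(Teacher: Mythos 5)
Your proof is correct and takes essentially the same route as the paper, whose entire argument is ``by tracing through the possible paths through the automaton''; your version merely makes explicit the justification (monotonicity of $\mu$, hence $g(\mu(S))=\mu(g(S))$, and the recursive minimization over the $15$ states) that the paper leaves implicit, and your intermediate values $g(F_2)=10\overline{\bf t}$, $g(F_{20})=1\overline{\bf t}$, $g(F_{203})=g(F_{310})=\overline{\bf t}$ all check out. One small point in your favour: you correctly read off the tail at the $F_{310}$ loop as $\overline{\bf t}$, i.e., the code $2030^\omega;3$ in the paper's convention, whereas the paper's proof misprints this code as $2030^\omega;1$.
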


\begin{proof}
By tracing through the possible paths through the 
automaton we easily find that $2030^\omega; 1$ is the code for the
lexicographically least sequence.
\end{proof}

\begin{remark}
This result does not seem to follow directly from
\cite{Allouche&Currie&Shallit:1998} as one referee suggested.
\end{remark}

\section{Automatic infinite binary $\frac{7}{3}$-power-free words}

As a consequence of Theorem~\ref{main}, we can give a complete description
of the infinite binary \gfg words that are $2$-automatic
\cite{Allouche&Shallit:2003}.  Recall that an infinite word $(a_n)_{n \geq 0}$
is $k$-automatic if there exists a deterministic finite automaton with
output that, on input $n$ expressed in base $k$, produces an output
associated with the state last visited that is equal to $a_n$.  Alternatively,
$(a_n)_{n \geq 0}$
is $k$-automatic if its $k$-kernel
$$\lbrace (a_{k^i n + j})_{n \geq 0} \ : \ i \geq 0 \text{ and } 
0 \leq j < k^i \rbrace$$
consists of finitely many distinct sequences.

\begin{theorem}
An infinite binary \gfg word is $2$-automatic if and only
if its code is both
specified by the DFA given above in Figure~\ref{figure1},
and is ultimately periodic.
\label{auto}
\end{theorem}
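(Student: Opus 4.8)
The plan is to pass through the \emph{desubstitution operator} hidden in the factorization theorem. For $\mathbf{x}\in\eff$ with code $(i_1,i_2,\ldots)$, define $\Phi(\mathbf{x})=\mathbf{y}$ where $\mathbf{x}=p_{i_1}\mu(\mathbf{y})$; by the factorization theorem this $\mathbf{y}\in\eff$ is well defined and computed from the first five letters of $\mathbf{x}$. Iterating, $\Phi^{j}(\mathbf{x})$ is exactly the word with code $(i_{j+1},i_{j+2},\ldots)$, and since $\mu$ is a morphism we may unfold $\mathbf{x}=q_{(j)}\,\mu^{j}\!\bigl(\Phi^{j}(\mathbf{x})\bigr)$ with $q_{(j)}=p_{i_1}\mu(p_{i_2})\cdots\mu^{j-1}(p_{i_j})$. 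First I would observe that for a word \emph{already known} to be \gfe, Theorem~\ref{main} makes the clause ``its code is specified by the DFA of Figure~\ref{figure1}'' automatically true; so the real content is the equivalence ``$\mathbf{x}$ is $2$-automatic $\iff$ the code of $\mathbf{x}$ is ultimately periodic.'' Moreover $\Phi^{j}(\mathbf{x})$ depends only on the code tail $(i_{j+1},\ldots)$, so the code is ultimately periodic iff the orbit $\{\Phi^{j}(\mathbf{x}):j\ge 0\}$ is finite (finiteness forces $\Phi^{j}(\mathbf{x})=\Phi^{j'}(\mathbf{x})$ for some $j<j'$, hence equal code tails and periodicity of the code from position $j{+}1$, and conversely). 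Thus the whole theorem reduces to: \emph{the $2$-kernel of $\mathbf{x}$ is finite iff the orbit of $\mathbf{x}$ under $\Phi$ is finite.}

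For the forward implication, assume $\mathbf{x}$ is $2$-automatic. Since $\mu^{j}(a)$ begins with $a$, the unfolding above yields the clean formula $\bigl(\Phi^{j}(\mathbf{x})\bigr)_{s}=\mathbf{x}_{2^{j}s+\ell_{j}}$, where $\ell_{j}=|q_{(j)}|$. Because each $|p_{i}|\le 2$, we get $\ell_{j}\le\sum_{t=0}^{j-1}2^{t+1}=2^{j+1}-2<2^{j+1}$, so writing $\ell_{j}=2^{j}a_{j}+c_{j}$ with $0\le c_{j}<2^{j}$ forces $a_{j}\in\{0,1\}$. Hence $\Phi^{j}(\mathbf{x})$ is a shift by at most one position of the $2$-kernel element $(\mathbf{x}_{2^{j}n+c_{j}})_{n\ge 0}$. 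A finite $2$-kernel together with the bound $a_{j}\le 1$ gives only finitely many possibilities for $\Phi^{j}(\mathbf{x})$, so the orbit is finite and the code is ultimately periodic.

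For the converse I would turn the finite orbit into a finite-state procedure computing $n\mapsto\mathbf{x}_{n}$. From $\mathbf{x}=p_{i_1}\mu(\Phi(\mathbf{x}))$ and $\mu(0)=01,\ \mu(1)=10$ one reads off that, for $n\ge|p_{i_1}|$ and $n'=n-|p_{i_1}|$, $\mathbf{x}_{n}$ equals $\bigl(\Phi(\mathbf{x})\bigr)_{\lfloor n'/2\rfloor}$ when $n'$ is even and its complement when $n'$ is odd. Reading $n$ least-significant-bit first, the map $n\mapsto\lfloor(n-c)/2\rfloor$ with $c=|p_{i_1}|\le 2$ is realized by a transducer carrying a bounded borrow; recursing, the relevant state is the triple (current orbit element $\Phi^{j}(\mathbf{x})$, a single complementation bit, a bounded borrow), and when the bits of $n$ are exhausted the residual index is bounded, so the output is read off a fixed short prefix of the current orbit element. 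As the orbit is finite, this is a genuine finite automaton with output, proving $\mathbf{x}$ is $2$-automatic. Equivalently, one may first absorb the finite prefix $q_{(j)}$ and the power $\mu^{j}$ by the standard closure facts (prepending a finite word, and applying the $2^{j}$-uniform morphism $\mu^{j}$, preserve $2$-automaticity, and $2^{j}$-automatic $=$ $2$-automatic), reducing to a word $\mathbf{w}=q'\mu^{d}(\mathbf{w})$ of purely periodic code and exhibiting $\mathbf{w}$ as a coding of a fixed point of a $2^{d}$-uniform morphism.

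The main obstacle is precisely this last construction: the prepended prefixes make the offsets $\ell_{j}$ generally \emph{not} multiples of powers of $2$, so the block decomposition of $\mathbf{x}$ induced by $\mu^{j}$ is misaligned with the radix-$2$ tiling that underlies the $2$-kernel. The forward direction copes with this because the bound $\ell_{j}<2^{j+1}$ confines the misalignment to a single-position shift; the delicate part of the converse is to verify that the same bound keeps the transducer's borrow (equivalently, the seam where $q'$ meets the $\mu^{d}$-blocks in $\mathbf{w}=q'\mu^{d}(\mathbf{w})$) bounded, so that the state set stays finite. Once that boundedness is checked, both implications close and the equivalence with ultimate periodicity of the code — and hence the theorem — follows.
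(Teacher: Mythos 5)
Your reduction of the theorem to ``$\mathbf{x}$ is $2$-automatic iff the $\Phi$-orbit of $\mathbf{x}$ is finite'' is sound (uniqueness of the factorization makes ``finite orbit'' equivalent to ``ultimately periodic code,'' and Theorem~\ref{main} does make the DFA clause automatic), and your forward direction is complete and correct: the identity $\bigl(\Phi^{j}(\mathbf{x})\bigr)_{s}=\mathbf{x}_{2^{j}s+\ell_{j}}$ together with $\ell_{j}<2^{j+1}$ shows each orbit element is a shift by at most one position of a $2$-kernel element, so a finite kernel forces a finite orbit. This is essentially the contrapositive of the paper's argument for that half (the paper peels off one code symbol at a time, observes that the odd- or even-indexed subsequence is coded by the shifted code or its complement, and invokes the infinite pigeonhole principle); your explicit index computation is, if anything, tidier because it sidesteps the complementation entirely.

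The converse, however, is left with a hole that you yourself flag: you never verify that the transducer's borrow stays bounded, and without that the state set is not known to be finite, so the construction does not yet establish $2$-automaticity. The verification is routine from the bound you already have (writing $n=2^{j}h+l$ with $0\le l<2^{j}$, the estimate $0\le\ell_{j}<2^{j+1}$ gives $\lfloor(l-\ell_{j})/2^{j}\rfloor\in\{-2,-1,0\}$, so the borrow lives in a three-element set), but it must actually be carried out, together with the bookkeeping of the Thue--Morse parity bit inside each $\mu^{j}$-block. The paper avoids all of this: it writes the code as $yz^{\omega}$, uses closure of $2$-automatic sequences under prepending a finite prefix together with Lemma~\ref{tm} to reduce to the word coded by $z^{\omega}$, which satisfies $\mathbf{w}=t\varphi(\mathbf{w})$ with $\varphi=\mu^{k}$ and hence equals $t\,\varphi(t)\,\varphi^{2}(t)\cdots$, and then exhibits the $2$-kernel directly as a subset of an explicit finite set of sequences of the form $u\,\mu^{i}(v)\,\mu^{i+k}(v)\cdots$ with $|u|\le|t|$ and $v\in\{t,\overline{t}\}$. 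Your parenthetical alternative (reduce to $\mathbf{w}=q'\mu^{d}(\mathbf{w})$ and realize it via a uniform morphism) is in the same spirit as the paper's route, but as written it is a one-sentence sketch; either version of the converse needs the final finiteness computation written down before the equivalence closes.
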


First, we need a lemma:

\begin{lemma}
An infinite binary word ${\bf x} = a_0 a_1 a_2 \cdots$
is $2$-automatic if and only if
$\mu({\bf x})$ is $2$-automatic.
\label{tm}
\end{lemma}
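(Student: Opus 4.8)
The plan is to work directly from the definition of $2$-automaticity via the $2$-kernel, which the excerpt has just recalled. The essential observation is that, writing $\mu({\bf x}) = b_0 b_1 b_2 \cdots$, the defining relations $\mu(0)=01$ and $\mu(1)=10$ give, for every $n \geq 0$,
$$ b_{2n} = a_n, \qquad b_{2n+1} = \overline{a_n}. $$
Thus the even-indexed letters of $\mu({\bf x})$ reproduce ${\bf x}$ and the odd-indexed letters reproduce its complement. I would let $K({\bf y})$ denote the $2$-kernel of a sequence ${\bf y}$ and prove the lemma by showing that $K({\bf x})$ is finite if and only if $K(\mu({\bf x}))$ is finite.

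For the forward direction I would bound $K(\mu({\bf x}))$ in terms of $K({\bf x})$. A typical kernel element of $\mu({\bf x})$ is $(b_{2^i n + j})_{n \geq 0}$ with $i \geq 1$ and $0 \leq j < 2^i$. Splitting on the parity of $j$ and using the two relations above, I would rewrite $2^i n + j$ as $2(2^{i-1} n + j')$ or $2(2^{i-1} n + j')+1$ according as $j = 2j'$ or $j = 2j'+1$; the first case yields $(a_{2^{i-1} n + j'})_{n}$, an element of $K({\bf x})$, and the second yields its letterwise complement. Together with the single element for $i=0$, namely $\mu({\bf x})$ itself, this gives the inclusion $K(\mu({\bf x})) \subseteq \{\mu({\bf x})\} \cup K({\bf x}) \cup \overline{K({\bf x})}$, where $\overline{K({\bf x})}$ is the set of letterwise complements of elements of $K({\bf x})$. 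Hence $|K(\mu({\bf x}))| \leq 1 + 2\,|K({\bf x})|$, so finiteness of $K({\bf x})$ forces finiteness of $K(\mu({\bf x}))$.

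For the converse I would exhibit $K({\bf x})$ as a subset of $K(\mu({\bf x}))$. Given an element $(a_{2^i n + j})_{n}$ of $K({\bf x})$ with $0 \leq j < 2^i$, the relation $a_m = b_{2m}$ lets me write $a_{2^i n + j} = b_{2^{i+1} n + 2j}$, and since $0 \leq 2j < 2^{i+1}$ this is exactly the kernel element $(b_{2^{i+1} n + 2j})_{n}$ of $\mu({\bf x})$. Hence $K({\bf x}) \subseteq K(\mu({\bf x}))$ and $|K({\bf x})| \leq |K(\mu({\bf x}))|$, so finiteness of $K(\mu({\bf x}))$ forces finiteness of $K({\bf x})$. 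Combining the two bounds yields the stated equivalence.

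I do not expect a serious obstacle; the one point requiring care is the appearance of complemented sequences in the forward direction. This is harmless because letterwise complementation is a bijection on sequences that commutes with decimation, so $\overline{K({\bf x})}$ has the same cardinality as $K({\bf x})$. The only genuine bookkeeping is the index arithmetic modulo powers of $2$, which is routine. As an alternative route, one could instead deduce both directions from the standard closure of the $2$-automatic sequences under $2$-uniform morphisms and under $2$-decimation, as in Allouche and Shallit \cite{Allouche&Shallit:2003}, since $\mu$ is $2$-uniform and ${\bf x}$ is the even decimation of $\mu({\bf x})$; the kernel computation above is simply the explicit form of that argument.
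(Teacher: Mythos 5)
Your proof is correct. Note, however, that the paper does not actually prove this lemma in situ: its entire ``proof'' is the sentence ``Proved in \cite{Shallit:2011}.'' So you have supplied a self-contained argument where the authors defer to an earlier paper. Your route is the natural one and it checks out: from $\mu(0)=01$, $\mu(1)=10$ one indeed gets $b_{2n}=a_n$ and $b_{2n+1}=\overline{a_n}$; writing $j=2j'+r$ with $r\in\{0,1\}$ turns each kernel element $(b_{2^i n+j})_n$ with $i\ge 1$ into $(a_{2^{i-1}n+j'})_n$ or its letterwise complement, giving $K(\mu({\bf x}))\subseteq\{\mu({\bf x})\}\cup K({\bf x})\cup\overline{K({\bf x})}$, and conversely $(a_{2^i n+j})_n=(b_{2^{i+1}n+2j})_n$ gives $K({\bf x})\subseteq K(\mu({\bf x}))$. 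The one point needing care --- that complementation preserves finiteness of the kernel --- you address explicitly. Your closing remark that both directions also follow from the standard closure of $2$-automatic sequences under $2$-uniform morphisms and under decimation is exactly the high-level argument one would expect the cited reference to use; your kernel computation is its explicit, elementary form, and it has the advantage of being verifiable without leaving the paper.
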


\begin{proof}
Proved in \cite{Shallit:2011}.  
\end{proof}

Now we can prove Theorem~\ref{auto}.

\begin{proof}
Suppose the code of $\bf x$ is ultimately periodic.  Then we can write
its code as $y z^\omega$ for some finite words $y$ and $z$.  
Since the class of $2$-automatic sequences is closed under appending
a finite prefix \cite[Corollary 6.8.5]{Allouche&Shallit:2003}, by
Lemma~\ref{tm}, it suffices to show that the word coded by $z^\omega$
is $2$-automatic.

The word $z^\omega$ codes a \gfg word ${\bf w}$ satisfying ${\bf w} = t
\varphi ({\bf w})$, where $t$ is a finite word and $\varphi = \mu^k$.
Hence, by iteration, we get that ${\bf w} = t \varphi(t)
\varphi^2(t) \cdots$.   It is now easy to see that the $2$-kernel of
$\bf w$ is contained in 
$$S := \lbrace u \mu^i(v) \mu^{i+k}(v) \mu^{i+2k}(v) \cdots \ : \ |u| \leq |t|
\text{ and } v \in \lbrace t, \overline{t} \rbrace \text{ and } 1 \leq i \leq k \rbrace,$$
which is a finite set.

On the other hand, suppose the code for $\bf x$ is not ultimately
periodic.  Then we show that the $2$-kernel is infinite.  Now it is
easy to see that if the code for $\bf x$ is $a{\bf y}$ for some letter
$a \in \lbrace 0, 1, 3 \rbrace$ then one of the sequences in the
$2$-kernel (obtained by taking either the odd- or even-indexed terms)
is either coded by $\bf y$ or its complement is coded by $\bf y$.  On
the other hand, if the code for $\bf x$ is $a{\bf y}$ with $a \in
\lbrace 2, 4 \rbrace$, then $\bf y$ begins with $0, 1, $ or $3$, say
${\bf y} = b {\bf z}$.  It follows that the subsequences
obtained by taking the terms congruent to $0, 1, 2,$ or $3$ (mod $4$)
is coded by $\bf z$, or its complement is coded by $\bf z$.  Since the
code for $\bf x$ is not ultimately periodic, there are infinitely many
distinct sequences in the orbit of the code for $\bf x$, under the
shift.  By the infinite pigeonhole principle, infinitely many
correspond to a sequence in the $2$-kernel, or its complement.  Hence
$\bf x$ is not $2$-automatic.  
\end{proof}

\section{Acknowledgments}

We are grateful to the referees for their helpful suggestions.

\bibliographystyle{eptcs}
\bibliography{fif}

\end{document}